\theoremstyle{plain}
\newtheorem{thm}{Theorem}
\newtheorem{lem}[thm]{Lemma}
\newtheorem{claim}[thm]{Claim}
\theoremstyle{definition}
\newtheorem{rem}[thm]{Remark}
\newcommand{\leqnomode}{\tagsleft@true}
\newcommand{\reqnomode}{\tagsleft@false}
\newcommand{\gp}{{G_\pi}}
\renewcommand{\ALG@beginalgorithmic}{\small}
\newcounter{algsubstate}
\def\final{0}  % set this to 1 to get a comment-free version
\def\iflong{\iffalse}
\newcommand{\knote}[1]{{\color{red}[{\tiny \textbf{Kristóf:} \bf #1}]}}
\newcommand{\ernote}[1]{{\color{blue}[{\tiny \textbf{Erika:} \bf #1}]}}
\newcommand{\evnote}[1]{{\color{orange}[{\tiny \textbf{Evelin:} \bf #1}]}}
\newcommand{\knote}[1]{}
\newcommand{\ernote}[1]{}
\newcommand{\evnote}[1]{}
\title{A dual approach for dynamic pricing\\ in multi-demand markets}
\author{
Krist\'of B\'erczi\thanks{MTA-ELTE Egerv\'ary Research Group, Department of Operations Research, E\"otvös Loránd University, Budapest, Hungary. Email: \texttt{kristof.berczi@ttk.elte.hu}.}
\and
Erika R. B\'erczi-Kov\'acs\thanks{MTA-ELTE Egerváry Research Group, Department of Operations Research, Eötvös Loránd University, Budapest, Hungary. Email: \texttt{erika.berczi-kovacs@ttk.elte.hu}.}
\and
Evelin Sz\"ogi\thanks{Department of Operations Research, Eötvös Loránd University, Budapest, Hungary. Email: \texttt{szogievelin@student.elte.hu}.}
}
\begin{document}
\maketitle

\begin{abstract}
Dynamic pricing schemes were introduced as an alternative to posted-price mechanisms. In contrast to static models, the dynamic setting allows to update the prices between buyer-arrivals based on the remaining sets of items and buyers, and so it is capable of maximizing social welfare without the need for a central coordinator. In this paper, we study the existence of optimal dynamic pricing schemes in combinatorial markets. In particular, we concentrate on multi-demand valuations, a natural extension of unit-demand valuations. The proposed approach is based on computing an optimal dual solution of the maximum social welfare problem with distinguished structural properties. 

Our contribution is twofold. By relying on an optimal dual solution, we show the existence of optimal dynamic prices in unit-demand markets and in multi-demand markets up to three buyers, thus giving new interpretations of results of Cohen-Addad et al. \cite{cohen2016invisible} and Berger et al. \cite{berger2020power}, respectively. Furthermore, we provide an optimal dynamic pricing scheme for bi-demand valuations with an arbitrary number of buyers. In all cases, our proofs also provide efficient algorithms for determining the optimal dynamic prices. 

\medskip

\noindent \textbf{Keywords:} Dynamic pricing scheme, Multi-demand markets, Social welfare
\end{abstract}

%%%%%%%%%%%%%%%%
\section{Introduction}
%%%%%%%%%%%%%%%%

A combinatorial market consists of a set of indivisible goods and a set of buyers, where each buyer has a valuation function that represents the buyer's preferences over the subsets of items. From an optimization point of view, the goal is to find an allocation of the items to buyers in such a way that the total sum of the buyers' values is maximized -- this sum is called the social welfare. An optimal allocation can be found efficiently in various settings \cite{nisan2006communication,clarke1971multipart,vickrey1961counterspeculation,groves1973incentives}, but the problem becomes significantly more difficult if one would like to realize the optimal social welfare through simple mechanisms.

A great amount of work concentrated on finding optimal pricing schemes. Given a price for each item, we define the utility of a buyer for a bundle of items to be the value of the bundle with respect to the buyer's valuation, minus the total price of the items in the bundle. A pair of pricing and allocation is called a Walrasian equilibrium if the market clears (that is, all the items are assigned to buyers) and everyone receives a bundle that maximizes her utility. Given any Walrasian equilibrium, the corresponding price vector is referred to as Walrasian pricing, and the definition implies that the corresponding allocation maximizes social welfare. 

Although Walrasian equilibria have distinguished properties, Cohen-Addad et al. \cite{cohen2016invisible} realized that the existence of a Walrasian equilibrium alone is not sufficient to achieve optimal social welfare based on buyers' decisions. Different bundles of items might have the same utility for the same buyer, and in such cases ties must be broken by a central coordinator in order to ensure that the optimal social welfare is achieved. However, the presence of such a tie-breaking rule is unrealistic in real life markets and buyers choose an arbitrary best bundle for themselves without caring about social optimum.

Dynamic pricing schemes were introduced as an alternative to posted-price mechanisms that are capable of maximizing social welfare even without a central tie-breaking coordinator. In this model, the buyers arrive in a sequential order, and each buyer selects a bundle of the remaining items that maximizes her utility. The buyers' preferences are known in advance, and the seller is allowed to update the prices between buyer arrivals based upon the remaining set of items, but without knowing the identity of the next buyer. The main open problem in~\cite{cohen2016invisible} asked whether any market with gross substitutes valuations has a dynamic pricing scheme that achieves optimal social welfare.

\paragraph{Related work.}

Walrasian equilibria were introduced already in the late 1800s \cite{walras1896elements} for divisible goods. A century later, Kelso and Crawford \cite{kelso1982job} defined gross substitutes functions and verified the existence of Walrasian prices for such valuations. It is worth mentioning that the class of gross substitutes functions coincides with that of M${}^\natural$-concave
 functions, introduced by Murota and Shioura~\cite{MurotaShioura1999}. The fundamental role of the gross substitutes condition was recognized by Gul and Stacchetti \cite{gul1999walrasian} who verified that it is necessary to ensure the existence of a Walrasian equilibrium. 

Cohen-Addad et al.~\cite{cohen2016invisible} and independently Hsu et al.~\cite{hsu2016prices} observed that Walrasian prices are not powerful enough to control the market on their own. The reason is that ties among different bundles must be broken in a coordinated fashion that is consistent with maximizing social welfare. Furthermore, this problem cannot be resolved by finding Walrasian prices where ties do not occur as \cite{hsu2016prices} showed that minimal Walrasian prices necessarily induce ties. To overcome these difficulties, \cite{cohen2016invisible} introduced the notion of dynamic pricing schemes, where prices can be redefined between buyer-arrivals. They proposed a scheme maximizing social welfare for matching or unit-demand markets, where the valuation of each buyer is determined by the most valuable item in her bundle. In each phase, the algorithm constructs a so-called `relation graph' and performs various computations upon it. Then the prices are updated based on structural properties of the graph.

Berger et al.~\cite{berger2020power} considered markets beyond unit-demand valuations, and provided a polynomial-time algorithm for finding optimal dynamic prices up to three multi-demand buyers. Their approach is based on a generalization of the relation graph of \cite{cohen2016invisible} that they call a `preference graph', and on a new directed graph termed the `item-equivalence graph'. They showed that there is a strong connection between these two graphs, and provided a pricing scheme based on these observations.

Further results on posted-price mechanisms considered matroid rank valuations \cite{berczi2020market}, relaxations such as combinatorial Walrasian equilibrium~\cite{feldman2016combinatorial}, and online settings \cite{blumrosen2008posted,chawla2010multi,chawla2010power,feldman2014combinatorial,dutting2016posted,dutting2017prophet,chawla2019pricing,ezra2018pricing,eden2019max}.  

\paragraph{Our contribution.}

In the present paper, we focus on multi-demand combinatorial markets. In this setting, each buyer $t$ has a positive integer bound $b(t)$ on the number of desired items, and the value of a set is the sum of the values of the $b(t)$ most valued items in the set. In particular, if we set each $b(t)$ to one then we get back the unit-demand case.

For multi-demand markets, the problem of finding an allocation that maximizes social welfare is equivalent to a maximum weight $b$-matching problem in a bipartite graph with vertex classes corresponding to the buyers and items, respectively. Note that, unlike in the case of Walrasian equilibrium, clearing the market is not required as a maximum weight $b$-matching might leave some of the items unallocated. The high level idea of our approach is to consider the dual of this problem, and to define an appropriate price vector based on an optimal dual solution with distinguished structural properties. 

Based on the primal-dual interpretation of the problem, first we give a simpler proof of a result of Cohen-Addad et al.~\cite{cohen2016invisible} on unit-demand valuations. Although this can be considered a special case of bi-demand markets, we discuss it separately as an illustration of our techniques.

When the total demand of the buyers exceeds the number of available items, ensuring the optimality of the final allocation becomes more intricate. Therefore, we consider instances satisfying the following property:
\leqnomode
\begin{equation}
\text{each buyer $t\in T$ receives exactly $b(t)$ items in every optimal allocation.} \tag{OPT} \label{eq:FAC}
\end{equation}
\reqnomode
While this is a restrictive assumption, it is a reasonable condition that holds for a wide range of applications, and also appeared in \cite{berger2020power} and recently in \cite{pashkovich2022two}. For example, if the total number of items is not less than the total demand of the buyers and the value of each item is strictly positive for each buyer, then it is not difficult to check that \eqref{eq:FAC} is satisfied. 

The problem becomes significantly more difficult for larger demands. Berger et al. \cite{berger2020power} observed that bundles that are given to a buyer in different optimal allocations satisfy strong structural properties. For markets up to three multi-demand buyers, they grouped the items into at most eight equivalence classes based on which buyer could get them in an optimal solution, and then analyzed the item-equivalence graph for obtaining an optimal dynamic pricing. We show that, when assumption \eqref{eq:FAC} is satisfied, these properties follow from the primal-dual interpretation of the problem, and give a new proof of their result for such instances.

The main result of the paper is an algorithm for determining optimal dynamic prices in bi-demand markets with an arbitrary number of buyers, that is, when the demand $b(t)$ is two for each buyer $t$.\footnote{In a recent manuscript, Pashkovich and Xie \cite{pashkovich2022two} showed that the result of Berger et al.~\cite{berger2020power} can be generalized from three to four buyers. They further extended the results of the current paper on bi-demand valuations to the case when each buyer is ready to buy up to three items.} Besides structural observations on the dual solution, the proof relies on uncrossing sets that are problematic in terms of resolving ties.
\medskip

The paper is organized as follows. Basic definitions and notation are given in Section~\ref{sec:prelim}, while Section~\ref{sec:bmatch} provides structural observations on optimal dynamic prices in multi-demand markets. Unit- and multi-demand markets up to three buyers are discussed in Section~\ref{sec:unitthree}. Finally, Section~\ref{sec:bi} solves the bi-demand case under the \eqref{eq:FAC} condition.

%%%%%%%%%%%%%%%%
\section{Preliminaries}
\label{sec:prelim}
%%%%%%%%%%%%%%%%

%%%%%%%%%%%%%%%%
\paragraph{Basic notation.}
%%%%%%%%%%%%%%%%

We denote the sets of \emph{real}, \emph{non-negative real}, \emph{integer}, and \emph{positive integer} numbers by $\mathbb{R}$, $\mathbb{R}_+$, $\mathbb{Z}$, and $\mathbb{Z}_{>0}$, respectively. Given a ground set $S$ and subsets $X,Y\subseteq S$, the \emph{difference} of $X$ and $Y$ is denoted by $X-Y$. If $Y$ consists of a single element $y$, then $X-\{y\}$ and $X\cup\{y\}$ are abbreviated by $X-y$ and $X+y$, respectively. The \emph{symmetric difference} of $X$ and $Y$ is $X\triangle Y\coloneqq (X-Y)\cup(Y-X)$. For a function $f\colon S\rightarrow\mathbb{R}$, the total sum of its values over a set $X$ is denoted by $f(X)\coloneqq \sum_{s\in X} f(s)$. The \emph{inner product} of two vectors $x, y \in \mathbb{R}^S$ is $x \cdot y \coloneqq  \sum_{s\in S} x(s) y(s)$. Given a set $S$, an \emph{ordering} of $S$ is a bijection $\sigma$ between $S$ and the set of integers $\{1,\dots,|S|\}$. For a set $X\subseteq S$, we denote the restriction of the ordering to $S-X$ by $\sigma|_{S-X}$. Given orderings $\sigma_1$ and $\sigma_2$ of disjoint sets $S_1$ and $S_2$, respectively, we denote by $\sigma=(\sigma_1,\sigma_2)$ the ordering of $S\coloneqq S_1\cup S_2$ where $\sigma(s)=\sigma_1(s)$ for $s\in S_1$ and $\sigma_2(s)+|S_1|$ for $s\in S_2$. 

Let $G=(S,T;E)$ be a bipartite graph with vertex classes $S$ and $T$ and edge set $E$. We will always denote the \emph{vertex set} of the graph by $V\coloneqq S\cup T$. For a subset $X\subseteq V$, we denote the \emph{set of edges induced by $X$} by $E[X]$, while $G[X]$ stands for the \emph{graph induced by $X$}. The graph obtained from $G$ by \emph{deleting $X$} is denoted by $G-X$. Given a subset $F\subseteq E$, the \emph{set of edges in $F$ incident to a vertex} $v\in V$ is denoted by $\delta_F(v)$. Accordingly, the \emph{degree} of $v$ in $F$ is $d_F(v)\coloneqq |\delta_F(v)|$. For a set $Z\subseteq T$, the \emph{set of neighbors of $Z$ with respect to $F$} is denoted by $N_F(Z)$, that is, $N_F(Z)\coloneqq \{s\in S\mid \text{there exists and edge $st\in F$ with $t\in Z$}\}$. The subscript $F$ is dropped from the notation or is changed to $G$ whenever $F$ is the whole edge set. 

%%%%%%%%%%%%%%%%
\paragraph{Market model.}
%%%%%%%%%%%%%%%%

A combinatorial market consists of a set $S$ of \emph{indivisible items} and a set $T$ of \emph{buyers}. We consider \emph{multi-demand}\footnote{Multi-demand valuations are special cases of weighted matroid rank functions for uniform matroids, see \cite{berczi2020market}.} markets, where each buyer $t\in T$ has a valuation $v_t\colon S\rightarrow\mathbb{R}_+$ over individual items together with an upper bound $b(t)$ on the number of desired items, and the \emph{value} of a set $X\subseteq S$ for buyer $t$ is defined as $v_t(X)\coloneqq \max\{v_t(X')\mid X'\subseteq X,|X'|\leq b(t)\}$. \emph{Unit-demand} and \emph{bi-demand} valuations correspond to the special cases when $b(t)=1$ and $b(t)=2$ for each $t\in T$, respectively.

Given a \emph{price vector} $p\colon S\rightarrow\mathbb{R}_+$, the \emph{utility} of buyer $t$ for $X$ is defined as $u_t(X)\coloneqq v_t(X)-p(X)$. The buyers, whose valuations are known in advance, arrive in an undetermined order, and the next buyer always chooses a subset of at most her desired number of items that maximizes her utility. In contrast to static models, the prices can be updated between buyer-arrivals based on the remaining sets of items and buyers. The goal is to set the prices at each phase in such a way that no matter in what order the buyers arrive, the final allocation maximizes the social welfare. Such a pricing scheme and allocation are called \emph{optimal}. It is worth emphasizing that a buyer may decide either to take or not to take an item which has $0$ utility, that is, it might happen that the bundle of items that she chooses is not inclusionwise minimal. This seemingly tiny degree of freedom actually results in difficulties that one has to take care of.  

\begin{lem}\label{lem:all}
We may assume that all items are allocated in every optimal allocation. 
\end{lem}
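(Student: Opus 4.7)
The plan is to iteratively remove items that fail the desired property and show that each removal preserves both the optimal social welfare and the existence of an optimal dynamic pricing scheme; this reduces any instance to one in which every item is allocated in every optimal allocation.

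The main observation is that if an item $s\in S$ is unallocated in some optimal allocation $A$, then $A$ is also a valid allocation of the reduced market with item set $S-s$ and has the same social welfare, while every allocation of the reduced market is trivially also an allocation in $S$. Hence the optimal social welfares of $S$ and $S-s$ coincide, so discarding $s$ does not affect the optimum.

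Next I would show that an optimal dynamic pricing scheme for $S-s$ extends to one for the original market: in every phase, set $p(s)\coloneqq M$ for a constant $M>\max_{t'\in T,\,s'\in S}v_{t'}(s')$. For any buyer $t$ and any bundle $X\ni s$ of items still available, a short calculation gives $v_t(X)-v_t(X-s)\leq v_t(s)<M$ and $p(X)-p(X-s)=M$, so $u_t(X)<u_t(X-s)$. Consequently, no buyer ever includes $s$ in her chosen bundle, and the buyers' decisions are entirely governed by the prices on $S-s$, reproducing the optimal allocation guaranteed by the scheme on the reduced market.

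Iterating the removal yields, after finitely many steps, a subset $S^\star\subseteq S$ in which every remaining item is allocated in every optimal allocation of the restricted market. Combining the steps above turns an optimal dynamic pricing scheme on $S^\star$ into an optimal dynamic pricing scheme on the original $S$, justifying the WLOG assumption of the lemma. The most delicate point---that a multi-demand buyer might attempt to swap a low-priced item in her current bundle for $s$---is foreclosed by the inequality $v_t(X)-v_t(X-s)\leq v_t(s)$, which reflects the structure of multi-demand valuations as sums of the top $b(t)$ individual values; this is the only place where the multi-demand nature of the problem needs nontrivial care.
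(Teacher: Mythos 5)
Your proposal is correct and follows essentially the same strategy as the paper: identify items that some optimal allocation can leave unallocated, assign them a prohibitively high price so that no buyer ever takes them, and delete them. The only differences are cosmetic --- the paper removes all such items in one step via a single optimal allocation of inclusionwise minimum support, whereas you remove them one at a time, and you additionally spell out the marginal-value inequality $v_t(X)-v_t(X-s)\leq v_t(s)$ that justifies why a large price suffices, a detail the paper leaves implicit.
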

\begin{proof}
One can find an optimal allocation that uses an inclusionwise minimum number of items by relying on a weighted $b$-matching algorithm, see \cite{}. Setting the price of unused items to a large value ensures that no buyer takes them. Hence every optimal allocation uses the same set of items, meaning that the remaining items play no role in the problem and so can be deleted.
\end{proof}

In particular, when \eqref{eq:FAC} is assumed, Lemma~\ref{lem:all} implies that the number of items coincides with the total demand of the buyers.

%%%%%%%%%%%%%%%%
\paragraph{Weighted $b$-matchings.}

Let $G=(S,T;E)$ be a bipartite graph and recall that $V\coloneqq S\cup T$. Given an upper bound $b\colon V\rightarrow\mathbb{Z}_+$ on the vertices, a subset $M\subseteq E$ is called a \emph{$b$-matching} if $d_M(v)\leq b(v)$ for every $v\in V$. If equality holds for each $v\in V$, then $M$ is called a \emph{$b$-factor}.  Notice that if $b(v)=1$ for each $v\in V$, then a $b$-matching or $b$-factor is simply a matching or perfect matching, respectively. K\H{o}nig's classical theorem~\cite{konig1916graphen} gives a necessary and sufficient condition for the existence of a perfect matching in a bipartite graph.

\begin{thm}[K\H{o}nig] \label{thm:konig}
There exists a perfect matching in a bipartite graph $G=(S,T;E)$ if and only if $|S|=|T|$ and $|N(Y)|\geq|Y|$ for every $Y\subseteq T$.
\end{thm}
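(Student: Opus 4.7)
The necessity is immediate: if $M$ is a perfect matching, then $|S|=|T|$, and for any $Y\subseteq T$ the edges of $M$ incident to $Y$ inject $Y$ into $N(Y)$, giving $|N(Y)|\geq|Y|$. So the real work is sufficiency, and my plan is the classical two-case induction on $n\coloneqq|T|$. The base case $n=1$ is obvious, because the single vertex of $T$ must have at least one neighbor by the hypothesis applied to $Y=T$.

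For the inductive step, I would split into two cases according to how tight the Hall-type condition is. \emph{Case 1 (strict).} Suppose $|N(Y)|\geq|Y|+1$ for every nonempty proper $Y\subsetneq T$. Pick any edge $st\in E$ (one exists because $|N(T)|\geq|T|\geq 1$), delete both endpoints, and check that the condition is preserved: deleting $s$ decreases $|N_{G-s-t}(Y)|$ by at most one compared to $|N_G(Y)|$ for any $Y\subseteq T-t$, so the strict inequality becomes at worst $|N_{G-s-t}(Y)|\geq|Y|$. Induction gives a perfect matching of $G-s-t$, and adding $st$ completes a perfect matching of $G$.

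\emph{Case 2 (tight).} Suppose some nonempty proper $Y\subsetneq T$ satisfies $|N(Y)|=|Y|$. Apply the induction hypothesis to the induced subgraph $G[Y\cup N(Y)]$, which inherits the Hall-type condition trivially, to obtain a perfect matching $M_1$ on this part. Then apply induction to $G'\coloneqq G-Y-N(Y)$; to justify this, observe that for any $Y'\subseteq T-Y$,
\begin{equation*}
|N_{G'}(Y')|\;=\;|N_G(Y'\cup Y)|-|N(Y)|\;\geq\;|Y'\cup Y|-|Y|\;=\;|Y'|,
\end{equation*}
using disjointness of $Y$ and $Y'$. The resulting perfect matching $M_2$ of $G'$, combined with $M_1$, yields a perfect matching of $G$.

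The only delicate step is the neighborhood inequality in Case 2, which is where the interaction between the "inside" block $Y\cup N(Y)$ and the "outside" block happens; everything else is routine bookkeeping. I would expect that in the paper this lemma is invoked only as a black box for deriving structural properties of optimal $b$-matchings, so the brevity of the above two-case argument is appropriate.
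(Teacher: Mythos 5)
Your argument is the standard two-case induction for Hall/K\H{o}nig and it is correct: necessity via the injection given by a perfect matching, the ``strict'' case handled by deleting the endpoints of an arbitrary edge, and the ``tight'' case handled by splitting off the block $Y\cup N(Y)$, with the key inequality $|N_{G'}(Y')|=|N_G(Y'\cup Y)|-|N(Y)|\geq |Y'|$ justified exactly as needed (note $N_G(Y'\cup Y)=N_G(Y')\cup N_G(Y)$, so the displayed identity is inclusion--exclusion with the intersection absorbed into $N(Y)$). For the record, the paper offers no proof of Theorem~\ref{thm:konig} at all: it is stated as a classical result with a citation to K\H{o}nig's original paper and then used purely as a black box in the proof of Lemma~\ref{lem:bm} (via the vertex-splitting reduction from $b$-factors to perfect matchings), so your guess about its role is accurate and there is no ``paper proof'' to compare against. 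Your write-up would serve as a self-contained replacement for the citation if one were wanted.
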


Let $w:E\rightarrow\mathbb{R}$ be a weight function on the edges. A function $\pi\colon V\rightarrow\mathbb{R}$ on the vertex set $V=S\cup T$ is a \emph{weighted covering} of $w$ if $\pi(s)+\pi(t)\geq w(st)$ holds for every edge $st\in E$. An edge $st$ is called \emph{tight with respect to $\pi$} if $\pi(s)+\pi(t)=w(st)$.  
The \emph{total value} of the covering is $\pi\cdot b=\sum_{v\in V}\pi(v)\cdot b(v)$. We refer to a covering of minimum total value as \emph{optimal}. The celebrated result of Egerváry~\cite{egervary1931matrixok} provides a min-max characterization for the maximum weight of a matching or a perfect matching in a bipartite graph.

\begin{thm}[Egerv\'ary] \label{thm:egervary}
Let $G=(S,T;E)$ be a graph, $w:W\rightarrow\mathbb{R}$ be a weight function. Then the maximum weight of a matching is equal to the minimum total value of a non-negative weighted covering $\pi$ of $w$. If $G$ has a perfect matching, then the maximum weight of a perfect matching is equal to the minimum total value of a weighted covering $\pi$ of $w$.
\end{thm}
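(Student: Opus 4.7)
The plan is to use the classical König-plus-perturbation recipe. First, I would establish weak duality for both cases by summing the covering inequalities along matching edges:
\[
w(M) = \sum_{st\in M} w(st) \leq \sum_{st\in M}\bigl(\pi(s)+\pi(t)\bigr) = \sum_{v\in V} d_M(v)\,\pi(v) \leq \pi\cdot\mathbf{1}.
\]
For a perfect matching, $d_M\equiv 1$, so the last step is an equality regardless of the sign of $\pi$. For a general matching, $d_M\leq 1$ combined with $\pi\geq 0$ closes the inequality.

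For the perfect matching case, I would fix an optimal covering $\pi$ (existence is standard: the dual LP is feasible and bounded below by weak duality) and let $\gp$ denote its tight subgraph. It suffices to show that $\gp$ contains a perfect matching $M$, for then the chain above collapses to $w(M)=\pi\cdot\mathbf{1}$. Assume the contrary; by Theorem~\ref{thm:konig} there is $Y\subseteq T$ with $|N_{\gp}(Y)|<|Y|$. For small $\varepsilon>0$, define $\pi'$ by decreasing $\pi$ by $\varepsilon$ on $Y$, increasing by $\varepsilon$ on $N_{\gp}(Y)$, and leaving other entries unchanged. Tight edges between $Y$ and $N_{\gp}(Y)$ have $\pi(s)+\pi(t)$ preserved; edges from $Y$ to $S\setminus N_{\gp}(Y)$ are non-tight and their positive slack absorbs $\varepsilon$; all remaining edges only get looser. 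Hence $\pi'$ is a covering of strictly smaller value $\pi\cdot\mathbf{1}-\varepsilon(|Y|-|N_{\gp}(Y)|)$, contradicting the optimality of $\pi$.

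For the matching case, the same blueprint works, but one must take care to preserve $\pi\geq 0$. Let $\pi$ be optimal, put $V_+\coloneqq\{v:\pi(v)>0\}$, and observe that by complementary slackness it suffices to find a matching in $\gp$ that covers $V_+$. Among maximum matchings of $\gp$, pick $M^*$ maximizing $|V(M^*)\cap V_+|$ and suppose by contradiction that some $v_0\in V_+\cap T$ is uncovered. Let $Y\subseteq T$ and $X\subseteq S$ denote the vertices reachable from $v_0$ by alternating paths in $(\gp,M^*)$. Standard augmenting-path arguments give $|X|=|Y|-1$ and $N_{\gp}(Y)\subseteq X$; moreover \emph{every $t\in Y$ lies in $V_+$}, since otherwise flipping the alternating path from $v_0$ to such a $t$ would cover $v_0$ while only uncovering a vertex outside $V_+$, contradicting the choice of $M^*$. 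Since $\min_{t\in Y}\pi(t)>0$, the same perturbation as in the perfect matching case produces a non-negative covering of strictly smaller value, a contradiction. The main obstacle is precisely this last subtlety: the refinement of $M^*$ to maximize $V_+$-coverage is what forces $Y\subseteq V_+$ and allows the perturbation to respect the sign constraint; the remainder of the argument is a clean replay of the perfect matching case.
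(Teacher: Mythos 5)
The paper states this theorem as a classical result with a citation to Egerv\'ary and gives no proof of it, so there is no in-paper argument to compare against; I can only assess your proof on its own terms, and it is correct. It is the standard primal--dual (Hungarian-method) argument: weak duality by summing the covering inequalities over the matching edges, then showing that the tight subgraph $G_\pi$ of an optimal covering contains the required matching by exhibiting, from any deficient set, an $\varepsilon$-perturbation that strictly decreases the total value. The two places where care is needed are both handled properly: in the perfect-matching case the perturbation need not respect any sign constraint, while in the matching case your refinement of $M^*$ (a maximum matching of $G_\pi$ maximizing coverage of $V_+=\{v:\pi(v)>0\}$) is exactly what forces the reachable set $Y$ to lie inside $V_+$, so that decreasing $\pi$ on $Y$ preserves non-negativity. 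The only omission is cosmetic: you treat an uncovered vertex $v_0\in V_+\cap T$ but not the symmetric case $v_0\in V_+\cap S$, which is closed by the identical argument with the roles of $S$ and $T$ exchanged. It would also be worth writing out the one-line verification behind ``it suffices to find a matching in $G_\pi$ covering $V_+$'', namely $w(M)=\sum_{v\in V}d_M(v)\pi(v)=\sum_{v\in V_+}\pi(v)=\pi\cdot\mathbf{1}$, which holds because $M\subseteq E_\pi$ and $\pi$ vanishes off $V_+$; combined with weak duality this yields the claimed min-max equality.
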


%%%%%%%%%%%%%%%%

%%%%%%%%%%%%%%%%
\section{Multi-demand markets and maximum weight \texorpdfstring{$b$}{b}-matchings}
\label{sec:bmatch}
%%%%%%%%%%%%%%%%

A combinatorial market with multi-demand valuations can be naturally  identified with an edge-weighted complete bipartite graph $G=(S,T;E)$ where $S$ is the set of items, $T$ is the set of buyers, and for every item $s$ and buyer $t$ the weight of edge $st\in E$ is $w(st)\coloneqq v_t(s)$. We extend the demands to $S$ as well by setting $b(s)=1$ for every $s\in S$. Then an optimal allocation of the items corresponds to a maximum weight subset $M\subseteq E$ satisfying $d_M(v)\leq b(v)$ for each $v\in S\cup T$.

%%%%%%%%%%%%%%%%
\subsection{Structure of weighted coverings}
\label{sec:covering}
%%%%%%%%%%%%%%%%

In general, a $b$-factor or even a maximum weight $b$-matching can be found in polynomial time (even in non-bipartite graphs, see e.g. \cite{schrijver2003combinatorial}). When $b$ is identically one on $S$, then the following folklore characterization follows easily from K\H{o}nig's and Egerv\'ary's theorems\footnote{The same results follow by strong duality applied to the linear programming formulations of the problems.}. 

\begin{lem} \label{lem:bm}
Let $G=(S,T;E)$ be a bipartite graph, $w:E\rightarrow\mathbb{R}_+$ be a weight function, and $b\colon V\rightarrow\mathbb{Z}_{>0}$ be an upper bound function satisfying $b(s)=1$ for $s\in S$. 
\begin{enumerate}[label=(\alph*),topsep=2pt,itemsep=0pt,partopsep=0pt,parsep=1pt]
    \item $G$ has a $b$-factor if and only if $|S|=b(T)$ and $|N(X)|\geq b(X)$ for every $X\subseteq T$. \label{eq:a}
    \item The maximum $w$-weight of a $b$-matching is equal to the minimum total value of a non-negative weighted covering $\pi$ of $w$. \label{eq:b}
\end{enumerate}
\end{lem}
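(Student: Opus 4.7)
The plan is to reduce both parts to the ordinary-matching setting via a standard vertex-splitting construction. I would form a bipartite graph $G'=(S,T';E')$ by replacing each $t\in T$ with $b(t)$ twin copies $t_1,\dots,t_{b(t)}$, each adjacent to exactly the $G$-neighbors of $t$ in $S$ and inheriting the edge weights via $w(st_i)\coloneqq w(st)$. The first observation to record is that $|T'|=b(T)$, and that $b$-matchings (respectively $b$-factors) in $G$ are in weight-preserving bijection with matchings (respectively perfect matchings) in $G'$: going from $G$ to $G'$, one distributes the at most $b(t)$ edges of $M$ at $t$ arbitrarily among its copies; going back, one collapses the copies to $t$.

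For part~(a), I would then apply K\H{o}nig's theorem (Theorem~\ref{thm:konig}) to $G'$: a perfect matching in $G'$ exists iff $|S|=|T'|=b(T)$ and $|N_{G'}(Y)|\ge|Y|$ for every $Y\subseteq T'$. Since all copies of the same $t$ share an identical neighborhood, the tightest constraints are those in which $Y$ is a union of complete copy-sets, corresponding to some $X\subseteq T$ with $|Y|=b(X)$ and $N_{G'}(Y)=N_G(X)$. This reduces Hall's condition on $G'$ to $|N(X)|\ge b(X)$ for every $X\subseteq T$, which is exactly what is claimed in~(a).

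For part~(b), Egerv\'ary's theorem (Theorem~\ref{thm:egervary}) applied to $G'$ yields that the maximum weight of a matching in $G'$ equals the minimum total value of a non-negative weighted covering of $w$ on $G'$; by the bijection already noted, this maximum also equals the maximum weight of a $b$-matching in $G$. It remains to move between coverings on $G$ and on $G'$. Given a non-negative covering $\pi$ on $G$, I would set $\pi'(s)\coloneqq\pi(s)$ and $\pi'(t_i)\coloneqq\pi(t)$ for every copy, obtaining a non-negative covering of $w$ on $G'$ of total value exactly $\pi\cdot b$. In the reverse direction, given a non-negative covering $\pi'$ on $G'$, I would define $\pi(s)\coloneqq\pi'(s)$ and $\pi(t)\coloneqq\min_{1\le i\le b(t)}\pi'(t_i)$; the covering inequality $\pi(s)+\pi(t)\ge w(st)$ then follows since $\pi'(s)+\pi'(t_i)\ge w(st)$ for every $i$, and the total value satisfies $\pi\cdot b\le\sum_{s\in S}\pi'(s)+\sum_{t\in T}\sum_{i=1}^{b(t)}\pi'(t_i)$, the total value of $\pi'$ on $G'$. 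Minimizing in both directions forces the two optima to coincide.

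I do not expect a real obstacle here: the whole argument is essentially a bookkeeping exercise once one has passed to $G'$. The only spot that requires a moment of care is the reverse direction of the covering translation in part~(b), where the pointwise minimum over copies must simultaneously preserve non-negativity (immediate from $\pi'(t_i)\ge 0$), the edge-by-edge covering inequality, and a non-increasing total value.
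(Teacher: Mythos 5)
Your proposal is correct and follows essentially the same route as the paper: split each buyer $t$ into $b(t)$ twin copies and invoke K\H{o}nig's and Egerv\'ary's theorems on the resulting graph. The only (harmless) difference is in pulling a covering back from $G'$ to $G$ in part~(b): you take the minimum of $\pi'$ over the copies of $t$, whereas the paper observes that in an \emph{optimal} covering all copies of $t$ already receive a common value; both yield the same conclusion.
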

\begin{proof}
Let $G'=(S',T;E')$ denote the graph obtained from $G$ by taking $b(t)$ copies of each vertex $t\in T$ and connecting them to the vertices in $N_G(t)$. It is not difficult to check that $G$ has a $b$-factor if and only if $G'$ has a perfect matching, thus first part of the theorem follows by Theorem~\ref{thm:konig}.

To see the second part, for each copy $t'\in T'$ of an original vertex $t\in T$, define the weight of edge $st'$ as $w'(st')\coloneqq w(st)$. Then the maximum $w$-weight of a $b$-matching of $G$ is equal to the maximum $w'$-weight of a matching of $G'$. Now take an optimal non-negative weighted covering $\pi'$ of $w'$ in $G'$. As the different copies of an original vertex $t\in T$ share the same neighbors in $G'$, each of them receive the same value in any optimal weighted covering of $w'$ - define $\pi(t)$ to be this value. Then $\pi$ is a non-negative weighted covering of $w$ in $G$ with total value equal to that of $\pi'$, hence the theorem follows by Theorem~\ref{thm:egervary}.
\end{proof}

Given a weighted covering $\pi$, the \emph{subgraph of tight edges} with respect to $\pi$ is denoted by $G_\pi=(S,T;E_\pi)$. In what follows, we prove some easy structural results on the relation of optimal $b$-matchings and weighted coverings.

\begin{lem} \label{lem:opt}
Let $G=(S,T;E)$ be a bipartite graph, $w:E\rightarrow\mathbb{R}_+$ be a weight function, and $b\colon V\rightarrow\mathbb{Z}_{>0}$ be an upper bound function satisfying $b(s)=1$ for $s\in S$.
\begin{enumerate}[label=(\alph*),topsep=2pt,itemsep=0pt,partopsep=0pt,parsep=1pt]
    \item For any optimal non-negative weighted covering $\pi$ of $w$, a $b$-matching $M\subseteq E$ has maximum weight if and only if $M\subseteq E_\pi$ and $d_M(v)=b(v)$ for each $v$ with $\pi(v)>0$. \label{en:match}
    \item For any optimal weighted covering $\pi$ of $w$, a $b$-factor $M\subseteq E$ has maximum weight if and only if $M\subseteq E_\pi$. \label{en:fac}
\end{enumerate}
\end{lem}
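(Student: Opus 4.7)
The plan is to prove both parts via the standard weak-duality chain for matching, with the equivalence of maximum-weight and complementary slackness filled in using Lemma~\ref{lem:bm} (and a $b$-factor analogue of it). I would start from the inequality
$$w(M) \;=\; \sum_{st \in M} w(st) \;\leq\; \sum_{st \in M}\bigl(\pi(s)+\pi(t)\bigr) \;=\; \sum_{v \in V} \pi(v)\, d_M(v),$$
valid for any $b$-matching $M\subseteq E$ and any weighted covering $\pi$ of $w$, where the step is an equality iff every edge of $M$ is tight, i.e.\ $M \subseteq E_\pi$.

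For part~\ref{en:match}, non-negativity of $\pi$ together with the bound $d_M(v) \leq b(v)$ lets me continue to $\sum_v \pi(v)\, d_M(v) \leq \sum_v \pi(v)\, b(v) = \pi \cdot b$, with equality iff $d_M(v)=b(v)$ whenever $\pi(v)>0$. Because $\pi$ is optimal, Lemma~\ref{lem:bm}\ref{eq:b} identifies $\pi\cdot b$ with the maximum $w$-weight of a $b$-matching, so $M$ is optimal precisely when both inequalities in the chain are tight — giving exactly the two stated conditions $M\subseteq E_\pi$ and $d_M(v)=b(v)$ for every $v$ with $\pi(v)>0$.

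For part~\ref{en:fac}, the second inequality collapses to an equality automatically because $d_M(v) = b(v)$ holds for every $v$ by definition of a $b$-factor, so nothing is lost without the non-negativity hypothesis on $\pi$. What I would still need is a $b$-factor analogue of Lemma~\ref{lem:bm}\ref{eq:b}, namely that the minimum total value of a (not necessarily non-negative) weighted covering equals the maximum $w$-weight of a $b$-factor. This I would get by applying the second half of Theorem~\ref{thm:egervary} to the auxiliary graph $G'$ obtained by taking $b(t)$ copies of each $t\in T$, in exactly the same way as in the proof of Lemma~\ref{lem:bm}\ref{eq:a}, and then collapsing copies (all copies of a given $t$ receive the same value in any optimal weighted covering, by symmetry). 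Given that, $M$ is a maximum-weight $b$-factor iff equality holds throughout the chain, i.e.\ iff $M\subseteq E_\pi$.

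I do not expect any real obstacle: this is essentially complementary slackness for bipartite matching. The only step deserving a few extra lines is part~\ref{en:fac}, where the applicable min-max theorem is the $b$-factor version rather than the $b$-matching version supplied directly by Lemma~\ref{lem:bm}\ref{eq:b}; the reduction-to-copies argument handles this cleanly and explains why the non-negativity assumption on $\pi$ can be dropped.
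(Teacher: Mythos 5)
Your proposal is correct and follows essentially the same route as the paper: the weak-duality chain $w(M)\leq\sum_{st\in M}(\pi(s)+\pi(t))\leq\pi\cdot b$ combined with the min--max identification of $\pi\cdot b$ with the optimum, reading off the two tightness conditions. If anything you are slightly more careful than the paper, which uses the $b$-factor version of the covering min--max implicitly in part~(b); your explicit reduction to Egerv\'ary's theorem via vertex copies is exactly how the paper justifies the analogous statement in Lemma~\ref{lem:bm} and closes that small gap cleanly.
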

\begin{proof}
Let $M$ be a maximum weight $b$-matching and $\pi$ be an optimal non-negative weighted covering. We have $w(M)=\sum_{st\in M}w(st)\leq\sum_{st\in M}(\pi(s)+\pi(t))\leq \sum_{v\in V} \pi(v)\cdot b(v)$, and equality holds throughout if and only if $M$ consists of tight edges and $\pi(v)=0$ if $d_M(v)<b(v)$.

Now consider the $b$-factor case. Let $M$ be a maximum weight $b$-factor and $\pi$ be an optimal weighted covering. We have $w(M)=\sum_{st\in M}w(st)\leq\sum_{st\in M}(\pi(s)+\pi(t))=\sum_{v\in V} \pi(v)\cdot b(v)$, and the inequality is satisfied with equality if and only if $M$ consists of tight edges.
\end{proof}

Following the notation of \cite{berger2020power}, we call an edge $st\in E$ \emph{legal} if there exists a maximum weight $b$-matching containing it, and say that $s$ is \emph{legal for $t$}. A subset $F\subseteq\delta(t)$ is \emph{feasible} if there exists a maximum weight $b$-matching $M$ such that $\delta_M(t)=F$; in this case $N_F(t)$ is called \emph{feasible for $t$}\footnote{The notion of feasibility is closely related to `legal allocations' introduced in \cite{berger2020power}. However, `legal subsets' are different from feasible ones, hence we use a different term here to avoid confusion.}.  Notice that a feasible set necessarily consists of legal edges. The essence of the following technical lemma is that there exists an optimal non-negative weighted covering for which $\gp$ consists only of legal edges, thus giving a better structural understanding of optimal dual solutions; for an illustration see Figure~\ref{fig:ex}.

\begin{lem} \label{lem:dual}
The optimal $\pi$ attaining the minimum in Lemma~\ref{lem:bm}\ref{eq:b} can be chosen such that
\begin{enumerate}[label=(\alph*),topsep=2pt,itemsep=0pt,partopsep=0pt,parsep=1pt]
    \item an edge $st$ is tight with respect to $\pi$ if and only if it is legal, and \label{prop:a}
    \item $\pi(v)=0$ for some $v\in V$ if and only if there exists a maximum weight $b$-matching $M$ with $d_M(v)<b(v)$. \label{prop:b}
\end{enumerate}
Furthermore, such a $\pi$ can be determined in polynomial time.
\end{lem}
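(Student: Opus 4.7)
The plan is to exploit the polyhedral structure of the set $\Pi$ of optimal non-negative weighted coverings of $w$ and select $\pi$ in the relative interior of $\Pi$. One direction of each equivalence is essentially free from Lemma~\ref{lem:opt}\ref{en:match}: for any $\pi' \in \Pi$, every legal edge must be tight, and if some maximum weight $b$-matching $M$ satisfies $d_M(v) < b(v)$ then $\pi'(v) = 0$. So it only remains to construct a $\pi$ realising the reverse implications.

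First I would observe that $\Pi$ is a polytope, being the optimal face of the dual LP of the maximum weight $b$-matching problem, so it has a well-defined relative interior. At any relative interior point $\pi$, each defining inequality is either binding at every $\pi' \in \Pi$ or strictly satisfied at $\pi$. By strong LP duality applied to the $b$-matching LP, the edge constraint $\pi(s)+\pi(t) \geq w(st)$ is binding at every $\pi' \in \Pi$ iff the edge $st$ lies in the support of some optimal fractional primal solution; and by integrality of the bipartite $b$-matching polytope (total unimodularity), this support is exactly the union of supports of integral optimal $b$-matchings, i.e., the set of legal edges. An analogous complementary slackness argument applied to the constraints $\pi(v) \geq 0$ yields \ref{prop:b}: $\pi(v) = 0$ can be achieved at some $\pi' \in \Pi$ iff complementary slackness does not force the capacity at $v$ to be saturated in every optimal $M$, i.e., iff there exists an optimal $M$ with $d_M(v) < b(v)$.

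For the polynomial-time claim, I would proceed by taking any optimal $\pi_0 \in \Pi$, computable with a standard weighted bipartite $b$-matching algorithm. Then for each tight edge $st \in E_{\pi_0}$ and each vertex $v$ with $\pi_0(v) = 0$, I would solve the linear program maximising the corresponding slack (either $\pi(s) + \pi(t) - w(st)$ or $\pi(v)$) subject to $\pi \in \Pi$. The constraint $\pi \in \Pi$ is polynomially describable since the optimum value is known. Collecting the optimisers whose value is strictly positive and averaging them together with $\pi_0$ yields an element of $\Pi$ in which every attainable slack is strictly positive, by convexity of $\Pi$ and linearity of each slack function. By the characterisation above, this $\pi$ satisfies \ref{prop:a} and \ref{prop:b}.

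The main obstacle is the bridge between the fractional LP characterisation "tight at every dual optimum" and the combinatorial notion of a legal edge used in the paper. This rests on the integrality of the bipartite $b$-matching polytope; without it, a relative interior $\pi$ would only be guaranteed to be tight on edges belonging to some \emph{fractional} optimal $b$-matching, which need not be integral. Once integrality is invoked, the two notions coincide, and the analogous argument for vertex slacks goes through verbatim.
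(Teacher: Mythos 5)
Your proof is correct, but it takes a genuinely different route from the paper's. The paper works on the primal/weight side: it perturbs $w$ combinatorially in two phases (raising the weight of each non-legal edge by half its optimality gap, then lowering the weights around every always-saturated vertex and shifting the covering back), so that an optimal covering of the perturbed weights automatically satisfies \ref{prop:a} and \ref{prop:b} for the original $w$; the only computational primitive it needs is a maximum weight $b$-matching oracle. You instead work directly on the dual side, characterising the desired $\pi$ as a relative-interior point of the optimal dual face $\Pi$ and producing one by solving one auxiliary LP per tight constraint and averaging the optimisers with $\pi_0$. Both arguments are polynomial and both correctly isolate the two ingredients: (i) the integrality of the bipartite $b$-matching polytope, which you rightly flag as the bridge between ``tight at every dual optimum'' and the combinatorial notion of a legal edge (the optimal face is a face of an integral polytope, so the union of supports of fractional optima equals the union of supports of maximum weight $b$-matchings); and (ii) the fact that a dual constraint slack at no dual optimum corresponds to a primal variable positive in some primal optimum. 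One small imprecision: for (ii) the ``only if'' direction is not plain strong duality but the strict complementary slackness theorem of Goldman and Tucker (ordinary complementary slackness gives only the easy direction, which is the content of Lemma~\ref{lem:opt}\ref{en:match}); with that citation fixed, your argument is complete. What your approach buys is conceptual economy and generality --- the same relative-interior argument works verbatim for any LP with an integral optimal face --- while the paper's perturbation argument buys a purely combinatorial, LP-solver-free construction.
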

\begin{proof}
In both cases, the `if' part follows by Lemma~\ref{lem:opt}. Let $M$ and $\pi$ be a maximum weight $b$-matching and an optimal non-negative weighted covering, respectively. To prove the lemma, we will modify $\pi$ in two phases. 

In the first phase, we ensure \ref{prop:a} to hold. Take an arbitrary ordering $e_1,\dots,e_m$ of the edges, and set $\pi_0\coloneqq \pi$ and $w_0\coloneqq w$. For $i=1,\dots,m$, repeat the following steps. Let $\varepsilon_i\coloneqq \max\{w_{i-1}(M)\mid \text{$M$ is a $b$-matching}\}-\max\{w_{i-1}(M)\mid\text{$M$ is a $b$-matching containing $e_i$}\}$. Notice that $\varepsilon_i>0$ exactly if $e_i$ is not legal. Let $w_i$ denote the weight function obtained from $w_{i-1}$ by increasing the weight of $e_i$ by $\varepsilon_i/2$, and let $\pi_i$ be an optimal non-negative weighted covering of $w_i$. Due to the definition of $\varepsilon_i$, a $b$-matching $M$ has maximum weight with respect to $w_i$ if and only if it has maximum weight with respect to $w_{i-1}$, and in this case $w_{i}(M)=w_{i-1}(M)$. That is, the sets of maximum weight $b$-matchings with respect to $w$ and $w_m$ coincide, and the weights of legal edges do not change, therefore $\pi_m$ is an optimal non-negative weighted covering of $w$ as well.

In the second phase, we ensure \ref{prop:b} to hold. Take an arbitrary ordering $v_1,\dots,v_n$ of the vertices, %and consider $\pi_m$ and $w_m$ that the previous phase stopped with. 
for $j=1,\dots,n$, repeat the following steps. Let $\delta_j\coloneqq \max\{w_{m+j-1}(M)\mid\text{$M$ is a $b$-matching}\}-\max\{w_{m+j-1}(M)\mid\text{$M$ is a $b$-matching, $d_M(v_j)\leq b(v_j)-1$}\}$. Then $\delta_j>0$ if and only if the degree of $v_j$ is $b(v_j)$ in every maximum weight $b$-matching. Let $w_{m+j}$ denote the weight function obtained from $w_{m+j-1}$ by decreasing the weight of the edges incident to $v_j$ by $\delta_j/(b(v_j)+1)$ and let $\pi_{m+j}$ be an optimal non-negative weighted covering of $w_{m+j}$. Due to the definition of $\delta_j$, a $b$-matching $M$ has maximum weight with respect to $w_{m+j-1}$ if and only if it has maximum weight with respect to $w_{m+j}$, and in this case $w_{m+j}(M)=w_{m+j-1}(M)-\delta_j\cdot b(v_j)$. That is, the sets of maximum weight $b$-matchings with respect to $w$ and $w_{m+n}$ coincide. Let $\pi'$ denote the weighted covering of $w$ obtained by increasing the value of $\pi_{m+n}(v_\ell)$ by $\delta_\ell/(b(v_\ell)+1)$ for $\ell=1,\dots,n$. As the total value of $\pi'$ is greater than that of $\pi_{m+n}$ by exactly $\max\{w(M)\mid\text{$M$ is a $b$-matching}\}-\max\{w_{m+n}(M)\mid\text{$M$ is a $b$-matching}\}$, $\pi'$ is an optimal non-negative weighted covering of $w$.

As $\varepsilon_i>0$ whenever $e_i$ is not legal and $\delta_j>0$ whenever there is no a maximum weight $b$-matching $M$ with $d_M(v_j)<b(v_j)$, $\pi'$ satisfies both \ref{prop:a} and \ref{prop:b} as required.
\end{proof}

\begin{rem}
If the market satisfies property \eqref{eq:FAC}, the lemma implies that there exists an optimal non-negative weighted covering that is positive for every buyer and every item.
\end{rem}

Feasible sets play a key role in the design of optimal dynamic pricing schemes. After the current buyer leaves, the associated bipartite graph is updated by deleting the vertices corresponding to the buyer and her bundle of items, and the prices are recomputed for the remaining items. It follows from the definitions that the scheme is optimal if and only if the prices are always set in such a way that any bundle of items maximizing the utility of an agent $t$ forms a feasible set for $t$. 

%%%%%%%%%%%%%%%%
\subsection{Adequate orderings}
\label{sec:adequate}
%%%%%%%%%%%%%%%%

The high-level idea of our approach is as follows. First, we take an optimal non-negative weighted covering $\pi$ provided by Lemma~\ref{lem:dual}. 
If we define the price of an item $s\in S$ to be $\pi(s)$, then for any $t\in T$ we have $u_t(s)=v_t(s)-\pi(s)=w(st)-\pi(s)\leq\pi(t)$ and, by Lemma~\ref{lem:dual}\ref{prop:a}, equality holds if and only if $s$ is feasible for $t$. This means that each buyer prefers choosing items that are legal for her. For unit-demand valuations, such a solution immediately yields an optimal dynamic pricing scheme as explained in Section~\ref{sec:unit}. However, when the demands are greater than one, a collection of legal items might not form a feasible set, see an example on Figure~\ref{fig:ex}. In order to control the choices of the buyers, we slightly perturb the item prices by choosing an ordering $\sigma\colon S\to\{1,\dots,|S|\}$ and set the price of item $s$ to be $\pi(s)+\delta\cdot\sigma(s)$ for some sufficiently small $\delta>0$. Here the value of $\sigma(s)$ will be set in such a way that any bundle of items maximizing the utility of a buyer will form a feasible set for her, as needed. 

\begin{figure}[t!]
\centering
\begin{subfigure}[t]{0.45\textwidth}
  \centering
  \includegraphics[width=.8\linewidth]{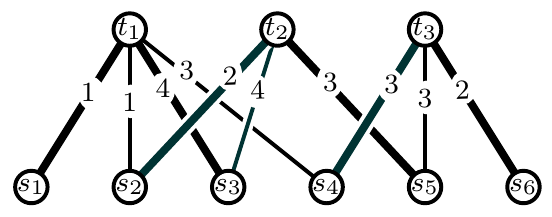}
  \caption{Maximum weight $b$-matching $M_1=\{t_1s_1,t_1s_3,t_2s_2,t_2s_5,t_3s_4,t_3s_6\}$.}
  \label{fig:ex1}
\end{subfigure}\hfill
\begin{subfigure}[t]{0.45\textwidth}
  \centering
  \includegraphics[width=.8\linewidth]{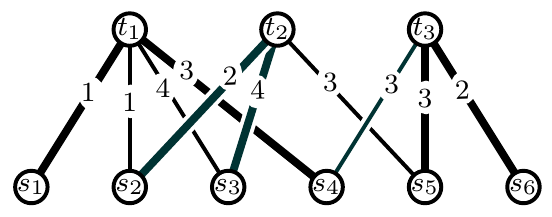}
  \caption{Maximum weight $b$-matching $M_2=\{t_1s_1,t_1s_4,t_2s_2,t_2s_3,t_3s_5,t_3s_6\}$}
  \label{fig:ex2}
\end{subfigure}\vspace{5pt}
\begin{subfigure}[t]{0.45\textwidth}
  \centering
  \includegraphics[width=.8\linewidth]{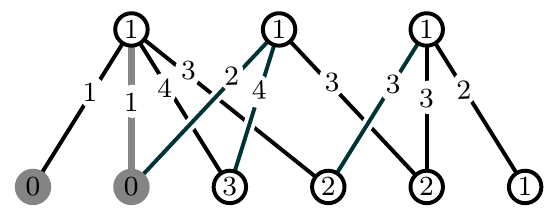}
  \caption{An optimal non-negative weighted covering $\pi$.  Notice that $s_1t_1$ is tight but not legal, and $\pi(s_1)=\pi(s_2)=0$ although $d_M(s_1)=d_M(s_2)=1$ for every maximum weight $b$-matching.}
  \label{fig:ex3}
\end{subfigure}\hfill
\begin{subfigure}[t]{0.45\textwidth}
  \centering
  \includegraphics[width=.8\linewidth]{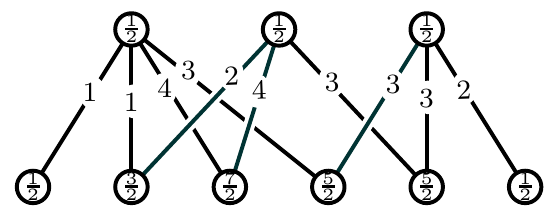}
  \caption{An optimal non-negative weighted covering satisfying the conditions of Lemma~\ref{lem:dual}.}
  \label{fig:ex4}
\end{subfigure}
\caption{A bipartite graph corresponding to a market with three buyers having demand two and six items. The numbers denote the weights of the edges; all the remaining edges have weight $0$. There are two maximum weight $b$-matchings $M_1$ (Figure~\ref{fig:ex1}) and $M_2$ (Figure~\ref{fig:ex2}).  Notice that both $s_3t_1$ and $s_4t_1$ are legal, but they do not form a feasible set.}
\label{fig:ex}
\end{figure}

Given a bipartite graph $G=(S,T;E)$ and upper bounds $b\colon V\to\mathbb{Z}_{>0}$ with $b(s)=1$ for $s\in S$, we call an ordering $\sigma\colon S\to\{1,\dots,|S|\}$ \emph{adequate for $G$} if it satisfies the following condition: for any $t\in T$, there exists a $b$-factor in $G$ that matches $t$ to its first $b(t)$ neighbors according to the ordering $\sigma$. 
For ease of notation, we introduce the \emph{slack} of $\pi$ to denote $\Delta(\pi)\coloneqq \min\bigl\{\min\{\pi(t)+\pi(s)-w(st)\mid st\in E,st\ \text{is not tight}\},\ \min\{\pi(v)\mid v\in V,\pi(v)>0\}\bigr\}$, where the minimum over an empty set is defined to be $+\infty$. Using this terminology, the above idea is formalized in the following lemma.

\begin{lem}\label{lem:fac}
Assume that \eqref{eq:FAC} is satisfied. Let $G=(S,T;E)$ be the edge-weighted bipartite graph associated with the market, $\pi$ be a weighted covering provided by Lemma~\ref{lem:dual}, and $\sigma$ be an adequate ordering for $G_\pi$. For $\delta\coloneqq \Delta(\pi)/(|S|+1)$, setting the prices to $p(s)\coloneqq \pi(s)+\delta\cdot\sigma(s)$ results in optimal dynamic prices.
\end{lem}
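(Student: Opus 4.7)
The plan is to show that at each round of the scheme the arriving buyer has a unique utility-maximizing bundle, and that this bundle is feasible for her. Combined with a short argument that removing a feasible bundle preserves \eqref{eq:FAC}, an induction on the number of buyers will then give the dynamic pricing guarantee; the core technical work is the single-round claim, driven by the structural properties of $\pi$ from Lemma~\ref{lem:dual}.

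Fix an arriving buyer $t$ and expand, for each item $s$,
\[
w(st) - p(s) \;=\; (w(st) - \pi(s)) - \delta\sigma(s).
\]
If $st \in E_\pi$ this equals $\pi(t) - \delta\sigma(s)$; otherwise $w(st) - \pi(s) \leq \pi(t) - \Delta(\pi)$, so the expression is at most $\pi(t) - \Delta(\pi) - \delta\sigma(s)$. The Remark after Lemma~\ref{lem:dual} guarantees $\pi(t) > 0$, hence $\pi(t) \geq \Delta(\pi)$ by the definition of slack. Since $\delta|S| = \Delta(\pi)|S|/(|S|+1) < \Delta(\pi)$, two conclusions follow: every tight item has strictly positive net value $\pi(t) - \delta\sigma(s) \geq \pi(t) - \delta|S| > 0$, and this minimum exceeds the maximum net value $\pi(t) - \Delta(\pi) - \delta$ over non-tight items by at least $\Delta(\pi) - \delta(|S|-1) = 2\Delta(\pi)/(|S|+1) > 0$.

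Because $v_t(X) = \sum_{s \in X} w(st)$ whenever $|X| \leq b(t)$, the utility $u_t(X)$ is additive in the per-item quantities computed above, so the maximum is attained by greedy selection. The bounds force any maximizer to consist solely of tight items and to contain exactly $b(t)$ of them, the latter because Lemma~\ref{lem:dual}\ref{prop:a} and \eqref{eq:FAC} ensure that $t$ has at least $b(t)$ tight neighbors in $G_\pi$. Since $\pi(t) - \delta\sigma(s)$ is strictly decreasing in $\sigma(s)$, the unique maximizer $X^*$ is the set of the $b(t)$ tight neighbors of $t$ with smallest $\sigma$-values. By the adequacy of $\sigma$ for $G_\pi$, there is a $b$-factor $M^*$ of $G_\pi$ that matches $t$ to $X^*$; Lemma~\ref{lem:opt}\ref{en:fac} then identifies $M^*$ with a maximum-weight $b$-factor of $G$, which by Lemma~\ref{lem:all} and \eqref{eq:FAC} is a maximum-weight $b$-matching. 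Hence $X^*$ is feasible for $t$.

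After $t$ leaves with $X^*$, set $G' \coloneqq G - t - X^*$. The matching $M^* - \delta_{M^*}(t)$ is a maximum-weight $b$-matching of $G'$, and conversely any maximum-weight $b$-matching of $G'$ reunited with $\delta_{M^*}(t)$ is a maximum-weight $b$-matching of $G$; so \eqref{eq:FAC} on $G$ transfers to $G'$. Recomputing $\pi$, $\sigma$, and $p$ on $G'$ and invoking the induction hypothesis on $|T|$ completes the proof. The main point to get right is the calibration of $\delta$: it must be small enough that the perturbation $\delta\sigma(s) \leq \delta|S|$ cannot close the slack $\Delta(\pi)$ separating tight edges from non-tight ones, yet strictly positive so that $\sigma$-tie-breaking among the tight neighbors of $t$ selects precisely an adequate bundle. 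The choice $\delta = \Delta(\pi)/(|S|+1)$ simultaneously satisfies both requirements.
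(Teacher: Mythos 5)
Your proposal is correct and follows essentially the same route as the paper's proof: the same per-item utility expansion $u_t(s)=\pi(t)-\delta\cdot\sigma(s)$ on tight edges, the same use of the slack $\Delta(\pi)$ and the calibration $\delta=\Delta(\pi)/(|S|+1)$ to separate tight from non-tight items and to keep tight items strictly profitable, and the same appeal to adequacy of $\sigma$ together with Lemma~\ref{lem:opt}\ref{en:fac} to conclude feasibility of the chosen bundle. The only addition is your explicit verification that \eqref{eq:FAC} is preserved after a buyer departs with a feasible bundle, a point the paper relegates to the remark that the prices are recomputed before each arrival; this is a harmless (indeed welcome) elaboration rather than a different argument.
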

\begin{proof}
By \eqref{eq:FAC}, every optimal solution is a $b$-factor. Observe that for any $s\in S$ and $t\in T$, we have 
\begin{align*}
u_t(s)
{}&{}=
v_t(s)-p(s)\\
{}&{}=
w(st)-(\pi(s)+\delta\cdot\sigma(s))\\
{}&{}\leq \pi(t)-\delta\cdot\sigma(s).
\end{align*}
Here equality holds if and only if $st$ is tight with respect to $\pi$, in which case $u_t(s)=\pi(t)-\delta\cdot\sigma(s)>\pi(t)-\Delta(\pi)\cdot |S|/(|S|+1)>0$ by the choice of $\delta$ and by Lemma~\ref{lem:dual}\ref{prop:b}. Furthermore, if $st$ is tight and  $s't$ is a non-tight edge of $G$, then $u_t(s')\leq\pi(t)-\Delta(\pi)\leq\pi(t)-\delta(|S|+1)<u_t(s)$ by the choice of $\delta$. 
Concluding the above, we get that no matter which buyer arrives next, she strictly prefers legal items over non-legal ones, and legal items have strictly positive utility values for her. That is, she chooses the first $b(t)$ of its neighbors in $G_\pi$ according to the ordering $\sigma$. As $\sigma$ is adequate for $G_\pi$, the statement follows by Lemma~\ref{lem:opt}\ref{en:fac}.
\end{proof}

It is worth emphasizing that the application of Lemma~\ref{lem:fac} provides optimal dynamic prices for a single round; the prices should be updated before the arrival of each buyer accordingly.

For a $\pi\colon V\to\mathbb{R}$ and $\sigma\colon S\to\{1,\dots,|S|\}$, the \emph{combination} of $\pi$ and $\sigma$ is an ordering $\sigma'\colon S\to\{1,\dots,|S|\}$ that is obtained by pre-ordering the elements of $S$ according to their $\pi$ values in a non-decreasing order, and then items having the same $\pi$ value are ordered according to $\sigma$. We denote the combination of $\pi$ and $\sigma$ by $\pi\circ\sigma$. The following technical lemma will be useful in the inductive proof.

\begin{lem}\label{lem:fix}
Let $G=(S,T;E)$ be an edge-weighted bipartite graph with all edges having weight one, and $b\colon V\to\mathbb{Z}_{>0}$ be an upper bound function satisfying $b(s)=1$ for $s\in S$ such that $G$ admits a $b$-factor. Furthermore, let $\pi$ be a weighted covering provided by Lemma~\ref{lem:dual}, and $\sigma$ be an adequate ordering for $G_\pi=(S,T;E_\pi)$. Then $\pi\circ\sigma$ is an adequate ordering for $G$.
\end{lem}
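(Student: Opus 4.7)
The plan is to exploit the fact that with unit edge-weights, $\pi$ forces a very rigid stratification of the neighbors of every buyer, so the adequacy of $\sigma$ on $G_\pi$ lifts automatically to adequacy of $\pi\circ\sigma$ on $G$. Since $w(st)=1$ for every $st\in E$, the covering inequality $\pi(s)+\pi(t)\geq 1$ holds everywhere with equality precisely on tight edges, which, by Lemma~\ref{lem:dual}\ref{prop:a}, are exactly the legal edges, that is, the edges of $G_\pi$. For any fixed buyer $t\in T$, this means every legal neighbor $s$ of $t$ satisfies $\pi(s)=1-\pi(t)$, while every non-legal neighbor $s$ of $t$ satisfies $\pi(s)>1-\pi(t)$. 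Thus within $N_G(t)$ the legal and non-legal neighbors are cleanly separated by the single threshold value $1-\pi(t)$.

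Next I would show that the first $b(t)$ neighbors of $t$ in $G$ under $\pi\circ\sigma$ coincide, as an ordered tuple, with the first $b(t)$ neighbors of $t$ in $G_\pi$ under $\sigma$. The primary sort of $\pi\circ\sigma$ is by $\pi$, so by the above stratification every non-legal neighbor of $t$ comes strictly after every legal one. Since $G$ admits a $b$-factor $F$ and $w\equiv 1$ makes every $b$-factor a maximum-weight $b$-matching, Lemma~\ref{lem:opt}\ref{en:fac} forces $F\subseteq E_\pi$; in particular $t$ has at least $b(t)$ legal neighbors. Hence the first $b(t)$ neighbors of $t$ in $G$ under $\pi\circ\sigma$ all lie in $N_{G_\pi}(t)$, and because $\pi\circ\sigma$ breaks ties among equal-$\pi$ items using $\sigma$, the induced order on these legal neighbors is precisely the restriction of $\sigma$ to $N_{G_\pi}(t)$.

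To finish, I would invoke adequacy of $\sigma$ on $G_\pi$ — meaningful because any $b$-factor of $G$ is contained in $E_\pi$, so $G_\pi$ itself admits a $b$-factor — to obtain a $b$-factor $M_t\subseteq E_\pi$ of $G_\pi$ that matches $t$ to its first $b(t)$ neighbors in $G_\pi$ under $\sigma$. Since $E_\pi\subseteq E$ and both graphs share the same vertex set, $M_t$ is also a $b$-factor of $G$, and by the previous paragraph it matches $t$ to its first $b(t)$ neighbors in $G$ under $\pi\circ\sigma$. As $t\in T$ was arbitrary, $\pi\circ\sigma$ is adequate for $G$. I do not foresee any serious obstacle beyond the initial stratification observation; once one sees that $w\equiv 1$ collapses all tight edges onto the single hyperplane $\pi(s)+\pi(t)=1$, the reduction from $G$ to $G_\pi$ is essentially forced.
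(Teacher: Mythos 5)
Your proof is correct and follows essentially the same route as the paper's: observe that unit weights force $\pi(s)=1-\pi(t)$ on tight (legal) edges and $\pi(s)>1-\pi(t)$ on non-tight ones, so under $\pi\circ\sigma$ the first $b(t)$ neighbors of $t$ in $G$ coincide with its first $b(t)$ neighbors in $G_\pi$ under $\sigma$, and adequacy transfers. You are in fact slightly more careful than the paper (which contains a sign slip, saying ``strictly less'' where it should be ``strictly greater'') in justifying that every $b$-factor of $G$ lies in $E_\pi$, so that $t$ has at least $b(t)$ legal neighbors and adequacy of $\sigma$ on $G_\pi$ is applicable.
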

\begin{proof}
Let $\sigma'\coloneqq\pi\circ\sigma$ denote the combination of $\pi$ and $\sigma$. We claim that for any $t\in T$, the first $b(t)$ neighbors of $t$ in $G_\pi$ according to $\sigma$ coincides with the first $b(t)$ neighbors of $t$ in $G$ according to $\sigma'$. Indeed, this follows from the fact that the edge-weights are identically $1$, hence the value of $\pi(s)$ is exactly $1-\pi(t)$ if $st\in E_\pi$ and strictly less if $st\in E\setminus E_\pi$. That is, in the ordering $\sigma'$, the edges in $E_\pi$ precede the edges in $E\setminus E_\pi$. As $G$ admits a $b$-factor by assumption, $t$ has at least $b(t)$ neighbors in $G_\pi$, and the lemma follows.
\end{proof}

%%%%%%%%%%%%%%%%
\section{Unit- and multi-demand markets}
\label{sec:unitthree}
%%%%%%%%%%%%%%%%

%%%%%%%%%%%%%%%%
\subsection{Unit-demand markets}
\label{sec:unit}
%%%%%%%%%%%%%%%%

Based on the primal-dual interpretation of the problem, first we give a simpler proof of a result of Cohen-Addad et al.~\cite{cohen2016invisible} on unit-demand valuations as an illustration of our approach.

\begin{thm}[Cohen-Addad et al.]
\label{thm:unit}
Every unit-demand market admits an optimal dynamic pricing that can be computed in polynomial time.
\end{thm}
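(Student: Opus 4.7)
The plan is to apply Lemma~\ref{lem:dual} to obtain an optimal non-negative weighted covering $\pi$ of the edge-weight function $w(st) \coloneqq v_t(s)$, with the property that an edge is tight precisely when it is legal, and $\pi(v) = 0$ precisely when there is a maximum weight matching not saturating $v$. I would then post the price $p(s) \coloneqq \pi(s)$ for each remaining item and argue by induction on $|T|$ that this choice constitutes an optimal dynamic pricing, updating $\pi$ on the reduced graph after each buyer arrival.

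For the inductive step, I would analyze the utility of the arriving buyer $t$. Since $b(t)=1$, her options are either a single item $s$ with utility $u_t(s) = w(st) - \pi(s) \leq \pi(t)$ (with equality iff $st$ is tight, i.e., legal by Lemma~\ref{lem:dual}\ref{prop:a}), or the empty bundle with utility $0$. I would split into two cases. If $\pi(t) > 0$, then legal items are the unique utility maximizers (being strictly better than any non-legal item and strictly better than the empty bundle), so $t$ necessarily picks some legal $s$, and because $st$ is legal there exists a maximum weight matching containing $st$. If $\pi(t) = 0$, then the maximizers are exactly the legal items for $t$ together with the empty bundle; if $t$ picks a legal $s$ the same argument as above applies, and if $t$ picks nothing, then by Lemma~\ref{lem:dual}\ref{prop:b} there exists a maximum weight matching leaving $t$ unmatched, which is consistent with her choice.

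In all cases the choice of $t$ extends to a maximum weight matching $M$ of $G$. Removing $t$ (and $s$, if she took an item), one sees that $M$ restricted to the remaining graph is a maximum weight matching of the reduced instance — otherwise one could substitute a heavier matching and contradict the optimality of $M$ in $G$. Hence the induction hypothesis applies to the reduced instance, where one recomputes a fresh optimal covering via Lemma~\ref{lem:dual}. Since $\pi$ (and its re-computation in each phase) can be obtained in polynomial time via any weighted bipartite matching algorithm, and there are at most $|T|$ phases, the whole scheme runs in polynomial time.

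The only delicate point is the case $\pi(t) = 0$, in which $t$ is indifferent between the empty bundle and each legal item. Without the structural guarantee of Lemma~\ref{lem:dual}\ref{prop:b}, nothing would prevent $t$ from walking away empty-handed in a situation where every optimal matching must saturate her, which would destroy social welfare. The equivalence "$\pi(t)=0$ iff some maximum weight matching leaves $t$ unmatched" is precisely what is needed to make the inductive argument go through, and this is what distinguishes the present primal-dual approach from a naive Walrasian pricing.
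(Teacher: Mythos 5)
Your proposal is correct and follows essentially the same route as the paper: take the structured optimal covering from Lemma~\ref{lem:dual}, post $p(s)=\pi(s)$, and split on $\pi(t)>0$ versus $\pi(t)=0$, using property~\ref{prop:b} to handle the buyer who walks away empty-handed. The only cosmetic difference is that you spell out the induction on $|T|$ and the optimality of the restricted matching explicitly, whereas the paper phrases the single-round claim in terms of feasible sets and lets the general dynamic-pricing framework of Section~\ref{sec:covering} carry the recursion.
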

\begin{proof}
Consider the bipartite graph associated with the market, take an optimal cover $\pi$ provided by Lemma~\ref{lem:dual}, and 
%Let $\delta\coloneqq \frac{1}{2}\min\{\pi(t)\mid t\in T,\pi(t)>0\}$ if there exists $t\in T$ with $\pi(t)>0$, otherwise let $\delta>0$ be an arbitrary positive number. 
set the price of item $s$ to be $\pi(s)$.
%+\delta$. 
For a pair of buyer $t\in T$ and $s\in S$, we have 
\begin{align*}
u_t(s)
{}&{}=
v_t(s)-p(s)\\
{}&{}=
w(st)-p(s)\\
{}&{}\leq 
(\pi(s)+\pi(t))-\pi(s)\\
{}&{}
=\pi(t).
\end{align*}
By %the choice of $\delta$ and by 
Lemma~\ref{lem:dual}\ref{prop:a}, strict equality holds if and only if $st$ is legal. We claim that no matter which buyer arrives next, she either chooses an item that is legal (and so forms a feasible set for her), or she takes none of the items and the empty set is feasible for her. 

To see this, assume first that $\pi(t)>0$. By Lemma~\ref{lem:dual}\ref{prop:b}, there exists at least one item legal for $t$, and those items are exactly the ones maximizing her utility. Now assume that $\pi(t)=0$. By Lemma~\ref{lem:dual}\ref{prop:b}, the empty set is feasible for $t$. Furthermore, for any item $s\in S$, the utility $u_t(s)$ is negative unless $s$ is legal for $t$, in which case $u_t(s)=0$.  Notice that a buyer may decide to take or not to take any item with zero utility value. However, she gets a feasible set in both cases by the above, thus concluding the proof.
\end{proof}

%%%%%%%%%%%%%%%%
\subsection{Multi-demand markets up to three buyers}
\label{sec:three}
%%%%%%%%%%%%%%%%

The aim of the section is to settle the existence of optimal dynamic prices in multi-demand markets with a bounded number of buyers, under the assumption \eqref{eq:FAC}.

\begin{thm}[Berger et al.]
\label{thm:multithree}
Every multi-demand market with property \eqref{eq:FAC} and at most three buyers admits an optimal dynamic pricing scheme, and such prices can be computed in polynomial time.
\end{thm}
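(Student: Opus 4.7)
The plan is to reduce, via Lemma~\ref{lem:fac}, to the combinatorial task of constructing an \emph{adequate} ordering of $S$ for the tight subgraph $G_\pi$, where $\pi$ is an optimal non-negative weighted covering supplied by Lemma~\ref{lem:dual}. Under \eqref{eq:FAC}, the Remark following Lemma~\ref{lem:dual} guarantees $\pi(v)>0$ for every $v\in S\cup T$, and by Lemma~\ref{lem:dual}\ref{prop:a} the edges of $G_\pi$ coincide exactly with the legal edges of $G$. Hence, for each item $s$, its neighborhood $N_{G_\pi}(s)\subseteq T=\{t_1,t_2,t_3\}$ records precisely which buyers may legally receive $s$.

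I would partition $S$ into at most seven equivalence classes $A_I$, indexed by $\emptyset\neq I\subseteq\{1,2,3\}$, according to $N_{G_\pi}(s)$. Items of $A_{\{i\}}$ must be allocated to $t_i$ in every maximum weight $b$-factor, so these come first in the ordering, grouped per buyer. To handle the remaining, shared classes $A_{\{1,2\}}, A_{\{1,3\}}, A_{\{2,3\}}, A_{\{1,2,3\}}$, I would compute one reference maximum weight $b$-factor $M^\ast$ in $G_\pi$ and further decompose each shared class $A_I$ into sub-classes $A_I^i$ ($i\in I$) containing the items that $M^\ast$ assigns to $t_i$. The candidate ordering then lists items as private first ($A_{\{1\}},A_{\{2\}},A_{\{3\}}$), followed by the doubly-shared sub-classes $A_{\{i,j\}}^i$ and $A_{\{i,j\}}^j$ in some fixed pattern, and finally the three blocks $A_{\{1,2,3\}}^1, A_{\{1,2,3\}}^2, A_{\{1,2,3\}}^3$ of the triply-shared class. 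The role of this $A_I$-decomposition is the exact analogue, derived here from a single dual certificate, of the item-equivalence classes used by Berger et al.~\cite{berger2020power}.

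The verification of adequacy is where I expect the main difficulty. For each $t_i$, the top $b(t_i)$ neighbors of $t_i$ in $G_\pi$ under $\sigma$ will in general differ from $M^\ast(t_i)$: a buyer arriving first may pull items out of another buyer's intended sub-class. The key step is an uncrossing/exchange lemma showing that, as long as the bundle consists of legal items and respects the class-capacity arithmetic forced by \eqref{eq:FAC} (each $A_I$ is fully consumed and each $t_i$ receives exactly $b(t_i)$), one can transform $M^\ast$ along an alternating structure into another maximum $b$-factor realising the bundle. The crucial point where the hypothesis of at most three buyers enters is that the alternating paths and cycles through the $A_I$ classes involve at most three buyer-vertices, so the admissible transfers between sub-classes $A_I^i$ form a graph of constant combinatorial type that can be analysed in a small number of cases; the positions of the sub-classes in $\sigma$ are then chosen precisely to neutralise each of those cases.

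Once adequacy is established, Lemma~\ref{lem:fac} produces the optimal dynamic prices for the current round; applying the construction again after each buyer's departure yields the full pricing scheme. Every computational step—solving the primal $b$-matching and dual covering, perturbing $\pi$ as in Lemma~\ref{lem:dual}, producing $M^\ast$, classifying items, and assembling $\sigma$—runs in polynomial time, giving the claimed complexity.
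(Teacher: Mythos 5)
Your high-level framework is exactly the paper's: reduce via Lemma~\ref{lem:fac} to constructing an adequate ordering for $G_\pi$, use Lemma~\ref{lem:dual}\ref{prop:a} to identify tight edges with legal edges, partition the items into classes indexed by the subset of buyers for which they are legal (your $A_I$ are the paper's $X_I$), and place the private classes first. However, the heart of the proof is precisely the part you leave open, and as stated there is a genuine gap. You order the shared sub-classes ``in some fixed pattern'' and defer the adequacy verification to an unproven ``uncrossing/exchange lemma'' with ``a small number of cases''; but exhibiting that pattern and carrying out that verification \emph{is} the theorem for three buyers. The paper's resolution has two concrete ingredients you are missing. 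First, Claim~\ref{cl:labeling} reduces feasibility of a candidate bundle $F$ for $t_i$ to the purely cardinal condition $|N_{G_\pi}(t_j)-F|\geq b_j$ for $j\neq i$ (a Hall-type condition that, for three buyers, only needs to be checked on singletons). Second, the ordering is not built from a reference optimum $M^\ast$ but from the \emph{forced} quantities $\max\{0,|X_{ij}|-b_j\}$: buyer $i$ must absorb at least this many items of $X_{ij}$ because $t_j$ can take at most $b_j$ of them. The labeling $\Theta$ places exactly these forced items early for each buyer, Claim~\ref{cl:4i} verifies by counting that the forced items for buyer $i$ number at most $b_i$, and a short computation then confirms the cardinal feasibility condition.

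Your device of splitting each shared class $A_I$ into sub-classes $A_I^i$ according to a single reference $b$-factor $M^\ast$ is also a riskier choice than the paper's: the sizes $|A_I^i|$ are not canonical (they can vary between $\max\{0,|X_{ij}|-b_j\}$ and $\min\{b_i,|X_{ij}|\}$ depending on which optimum you pick), so a pattern that interleaves these sub-classes can let an early-arriving buyer $t_j$ consume too many items of $A_{ij}$, violating $|N_{G_\pi}(t_i)-F|\geq b_i$ for the buyer who needed them. Any correct completion of your plan would have to replace the $M^\ast$-dependent sub-classes by the forced counts (or prove an exchange lemma strong enough to show the choice of $M^\ast$ does not matter), at which point you essentially recover the paper's labeling argument. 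The skeleton is right; the missing content is the explicit ordering of $X_{12},X_{13},X_{23},X_{123}$ and the counting proof that it works.
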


\begin{proof}
By Lemma~\ref{lem:fac}, it suffices to show the existence of an adequate ordering for $G_\pi$, where $\pi$ is an optimal non-negative weighted covering provided by Lemma~\ref{lem:dual}. For a single buyer, the statement is meaningless. For two buyers $t_1$ and $t_2$, $|S|=b(t_1)+b(t_2)$ by assumption \eqref{eq:FAC}. Let $\sigma$ be an ordering that starts with items in $N_\gp(t_1)\triangle N_\gp(t_2)$ and then puts the items in $N_\gp(t_1)\cap N_\gp(t_2)$ at the end of the ordering. Then, after the deletion of the first $b(t_i)$ neighbors of $t_i$ according to $\sigma$, the remaining $b(t_{3-i})$ items are in $N_\gp(t_{3-i})$, hence $\sigma$ is adequate. 

Now we turn to the case of three buyers. Let $t_1,t_2$ and $t_3$ denote the buyers, and let $b_i$, $v_i$, and $u_i$ denote the demand, valuation, and utility function corresponding to buyer $t_i$, respectively. Without loss of generality, we may assume that $b_1\geq b_2\geq b_3$. The proof is based on the observation that a set is feasible if and only if its deletion leaves `enough' items for the remaining buyers, formalized as follows.

\begin{claim} \label{cl:labeling}
A set $F\subseteq N_\gp(t_i)$ is feasible for $t_i$ if and only if $|F|=b_i$ and $|N_\gp(t_j)-F|\geq b_j$ for $j\neq i$. 
\end{claim}
\begin{proof}
The conditions are clearly necessary. To prove sufficiency, we show that the constraints of Lemma~\ref{lem:bm}\ref{eq:a} are fulfilled after deleting $t_i$ and $F$ from $\gp$, that is, $|S-F|=b(T)-b_i$ and $|N_\gp(Y)-F|\geq b(Y)$ for $Y\subseteq T-t_i$. By \eqref{eq:FAC} and the assumption that every item is legal for at least two buyers, $|S-F|=b(T)-b_i$ holds for $Y=T-t_i$. Furthermore, one-element subsets have enough neighbors by assumption, and the claim follows.
\end{proof}

For $I\subseteq\{1,2,3\}$, let $X_I\subseteq S$ denote the set of items that are legal exactly for buyers with indices in $I$, that is, $X_I\coloneqq \bigl(\bigcap_{i\in I} N_\gp(t_i)\bigr)-\bigl(\bigcup_{i\notin I} N_\gp(t_i)\bigr)$. We may assume that $X_1=X_2=X_3=\emptyset$. Indeed, given an adequate ordering for $\gp-(X_1\cup X_2\cup X_3)$ where the demands of $t_i$ is changed to $b_i-|X_i|$ for $i\in\{1,2,3\}$, putting the items in $X_1\cup X_2\cup X_3$ at the beginning of the ordering results in an adequate solution for the original instance. 

By assumption, $|X_{12}|+|X_{13}|+|X_{23}|+|X_{123}|=b_1+b_2+b_3$. Furthermore, $|X_{ij}|\leq b_i+b_j$ holds for $i\neq j$, as otherwise in any allocation there exists an item that is legal only for $t_i$ and $t_j$ but is not allocated to any of them, contradicting \eqref{eq:FAC}. We first define a labeling $\Theta\colon S\to\{1,2,3,4,5\}$ so that for each buyer $i$ and set $X_{ij}$, the number of items in $X_{ij}$ with label at most $4-i$ is $\max\{0, |X_{ij}|-b_j\}$. We will make sure that each buyer $i$ selects all items with label at most $4-i$ that are legal for her, which will be the key to satisfy the constraints of Claim \ref{cl:labeling}, see Figure~\ref{fig:three}.

\begin{figure}
    \centering
    \includegraphics[width=0.3\textwidth]{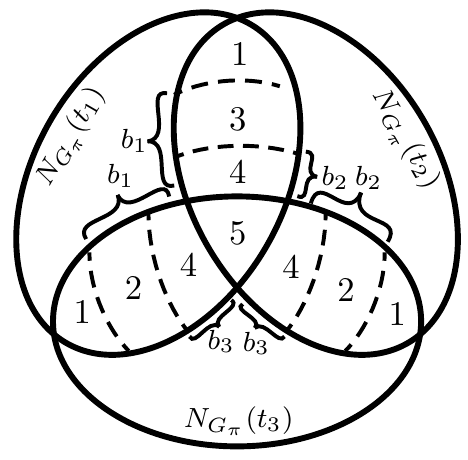}
    \caption{Definition of the labeling $\Theta$ for three buyers.  Notice that some parts might be empty, e.g. if $|X_{12}|\leq b_2$, then there are no items with label $1$ and $3$ in the intersection of $N_\gp(t_1)$ and $N_\gp(t_2)$.}
    \label{fig:three}
\end{figure}

All the items in $X_{123}$ are labeled by $5$. If $|X_{12}|\leq b_2$, then all the items in $X_{12}$ are labeled by $4$. If $b_1\geq |X_{12}|> b_2$, then $b_2$ items are labeled by $4$ and the remaining $|X_{12}|-b_2$ items are labeled by $3$ in $X_{12}$. If $|X_{12}|> b_1$, $b_2$ items are labeled by $4$, $b_1-b_2$ items are labeled by $3$, and the remaining $|X_{12}|-b_1$ items are labeled by $1$ in $X_{12}$. We proceed with $X_{13}$ analogously. If $|X_{13}|\leq b_3$, then all the items in $X_{13}$ are labeled by $4$. If $b_1\geq|X_{13}|> b_3$, then $b_3$ items are labeled by $4$ and the remaining $|X_{13}|-b_3$ items are labeled by $2$ in $X_{13}$. If $|X_{13}|>b_1$, $b_3$ items are labeled by $4$, $b_1-b_3$ items are labeled by $2$, and the remaining $|X_{13}|-b_1$ items are labeled by $1$ in $X_{13}$. Similarly, if $|X_{23}|\leq b_3$, then all the items in $X_{23}$ are labeled by $4$. If $b_2\geq |X_{23}|> b_3$, then $b_3$ items are labeled by $4$ and the remaining $|X_{23}|-b_3$ items are labeled by $2$ in $X_{23}$. If $|X_{23}|>b_2$, then $b_3$ items are labeled by $4$, $b_2-b_3$ items are labeled by $2$, and the remaining $|X_{23}|-b_2$ items are labeled by $1$ in $X_{23}$. 

Now let $\sigma$ be any ordering of the items satisfying the following condition: if the label of item $s_1$ is strictly less than that of item $s_2$, then $s_1$ precedes $s_2$ in the ordering, that is, $\Theta(s_1)<\Theta(s_2)$ implies $\sigma(s_1)<\sigma(s_2)$. We claim that $\sigma$ is adequate for $\gp$. To see this, it suffices to verify that the set $F$ of the first $b(t_i)$  neighbors of $t_i$ according to $\sigma$ fulfills the requirements of Claim~\ref{cl:labeling} for $i=1,2,3$. Let $\{i,j,k\}=\{1,2,3\}$. First we show that $F$ contains all the items $s\in X_{ij}\cup X_{ik}$ with $\Theta(s)\leq 4-i$. 

\begin{claim}\label{cl:4i}
We have $|\{s\in X_{ij}\cup X_{ik}\mid \Theta(s)\leq 4-i\}|\leq b_i$.
\end{claim}
\begin{proof}
Suppose to the contrary that this does not hold. Then $b_i<\max\{0,|X_{ij}|-b_j\}+\max\{0,|X_{ik}|-b_k\}$ by the definition of the labeling. %which contradicts either $b_i\geq 0$, $|X_{ij}|\leq b_i+b_j$, $|X_{ik}|\leq b_i+b_k$, or $|X_{ij}|+|X_{ik}|+|X_{jk}|+|X_{ijk}|=b_1+b_2+b_3$. 
Since $|X_{ij}|\leq b_i+b_j$ and $|X_{ik}|\leq b_i+b_k$, we have $\max\{0,|X_{ij}|-b_j\}\leq b_i$ and $\max\{0,|X_{ik}|-b_k\}\leq b_i$. Therefore, if
$b_i< \max\{0,|X_{ij}|-b_j\}+\max\{0,|X_{ik}|-b_k\}$, then both maximums must be positive on the right hand side. However, this leads to $b_i+b_j+b_k<|X_{ij}|+|X_{ik}|$, contradicting $b_i+b_j+b_k=|X_{ij}|+|X_{ik}|+|X_{jk}|+|X_{ijk}|$. 
\end{proof}

By Claim~\ref{cl:4i}, $F$ contains all the items $s\in X_{ij}\cup X_{ik}$ with $\Theta(s)\leq 4-i$, we have $|X_{ij}-F|\leq b_j$ and $|X_{ik}-F|\leq b_k$. Thus we get
\begin{align*}
    |N_\gp(t_j)-F|
    {}&{}=
    |X_{ij}-F|+|X_{jk}|+|X_{ijk}-F|\\
    {}&{}=
    |S|-|X_{ik}-F|-|F|\\
    {}&{}\geq
    (b_i+b_j+b_k)-b_k-b_i\\
    {}&{}=b_j.
\end{align*}
An analogous computation shows that $|N_\gp(t_k)-F|\geq b_k$. That is, $F$ is indeed a feasible set for $t_i$, concluding the proof of the theorem.
\end{proof}

%%%%%%%%%%%%%%%%
\section{Bi-demand markets}
\label{sec:bi}
%%%%%%%%%%%%%%%%

This section is devoted to the proof of the main result of the paper, the existence of optimal dynamic prices in bi-demand markets. The algorithms aims at identifying subsets of buyers whose neighboring set in $G_\pi$ is `small', meaning that other buyers should take no or at most one item from it. If no such set exists, then an adequate ordering is easy to find. Otherwise, by examining the structure of problematic sets, the problem is reduced to smaller instances. 

\begin{thm}
\label{thm:bi}
Every bi-demand market with property \eqref{eq:FAC} admits an optimal dynamic pricing scheme, and such prices can be computed in polynomial time.
\end{thm}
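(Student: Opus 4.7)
The plan is to apply Lemma~\ref{lem:fac}, reducing the task to producing an adequate ordering $\sigma$ of $S$ for $G_\pi$, where $\pi$ is the optimal non-negative weighted covering from Lemma~\ref{lem:dual}. Because $b(t) = 2$ for every $t \in T$ and \eqref{eq:FAC} holds, $|S| = 2|T|$ and $G_\pi$ admits a $b$-factor. The key combinatorial object will be the \emph{deficiency function} $\nu(Z) \coloneqq |N_{G_\pi}(Z)| - 2|Z|$ on $Z \subseteq T$: by Lemma~\ref{lem:bm}\ref{eq:a}, $\nu \geq 0$ everywhere with $\nu(\emptyset) = \nu(T) = 0$, and $\nu$ is submodular. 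I will call $Z$ \emph{tight} when $\nu(Z)=0$ and \emph{near-tight} when $\nu(Z)=1$; these are exactly the subsets that can obstruct feasibility of a top-two choice, because a pair $\{s_1,s_2\} \subseteq N_{G_\pi}(t)$ is feasible for $t$ precisely when $|\{s_1,s_2\} \cap N_{G_\pi}(X)| \leq \nu(X)$ for every $X \subseteq T-t$.

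The argument proceeds by induction on $|T|$. In the \emph{decomposition step}, suppose some $\emptyset \subsetneq Z \subsetneq T$ is tight. Since every item is legal (Lemma~\ref{lem:all}) and the buyers of $Z$ must absorb all of $N_{G_\pi}(Z)$ in any $b$-factor, no legal edge connects $T-Z$ to $N_{G_\pi}(Z)$, so by Lemma~\ref{lem:dual}\ref{prop:a} no such edge lies in $E_\pi$. Thus $G_\pi$ decomposes into two smaller bi-demand instances, each still satisfying \eqref{eq:FAC}. Applying the inductive hypothesis to each side and concatenating the resulting orderings (in either order) yields an adequate ordering for the whole, because buyers on one side never see items of the other in $G_\pi$.

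The remaining \emph{base case} assumes $\nu(Z) \geq 1$ for every proper nonempty $Z \subsetneq T$, so the only obstructions are near-tight sets $X$ disjoint from the current buyer $t$, each imposing that at most one of $t$'s top two items lies in $N_{G_\pi}(X)$. Submodularity of $\nu$ yields an uncrossing lemma: if $A, B$ are near-tight with $\emptyset \neq A \cap B$ and $A \cup B \neq T$, then $\nu(A \cap B) + \nu(A \cup B) \leq 2$, so, since no proper nonempty set is tight in this case, both $A \cap B$ and $A \cup B$ are again near-tight. Iterated uncrossing produces a laminar family $\mathcal{L}$ of near-tight sets that captures all obstructions. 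I would then pick an inclusionwise minimal $Z \in \mathcal{L}$, order the $2|Z|+1$ items of $N_{G_\pi}(Z)$ by recursing on the sub-instance induced by $Z$ (with a suitably chosen boundary item of $N_{G_\pi}(Z) \cap N_{G_\pi}(T-Z)$ relegated to the end of that block), and then append an ordering of $S - N_{G_\pi}(Z)$ obtained by recursing on the instance on $T - Z$ with the boundary item returned to play. Laminarity of $\mathcal{L}$ ensures that any near-tight $X \not\ni t$ meets the current block in a controlled way, so $t$'s top two never both fall into $N_{G_\pi}(X)$.

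The hard part will be the base case: establishing the precise laminar structure via uncrossing, identifying the correct boundary item of a minimal near-tight block so that the residual instance is again a bi-demand market satisfying \eqref{eq:FAC} (and in particular admits its own cover from Lemma~\ref{lem:dual}), and checking that near-tight sets straddling the block boundary remain honored after the recursion. Once these structural claims are in place, polynomial runtime is routine: $\pi$ and the tight/near-tight families can be obtained via submodular function minimization, and the recursion has depth at most $|T|$.
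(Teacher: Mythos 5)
Your high-level plan coincides with the paper's up to a point: reduce to finding an adequate ordering via Lemma~\ref{lem:fac}, split the instance along tight sets ($\nu(Z)=0$, the paper's Case~3, where $G_\pi$ disconnects), and uncross the $\nu(Z)=1$ sets (the paper's ``dangerous'' sets, Claim~\ref{cl:dang}). The gap is the assertion that iterated uncrossing yields a laminar family of near-tight sets capturing all obstructions. Uncrossing two near-tight sets $A,B$ with $A\cap B\neq\emptyset$ requires $A\cup B\neq T$: if $A\cup B=T$, then $\nu(A\cup B)=0$ and submodularity only gives $\nu(A\cap B)\le 2$, so neither set produced by uncrossing need be near-tight, and the family of near-tight sets is genuinely not laminar in general. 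More importantly, laminarity of the family \emph{inside $T$} does not control the interaction of the corresponding neighborhoods \emph{inside $S$}: two dangerous sets $X,Z$ that are disjoint in $T$ (hence perfectly ``laminar'') can have $|N_\gp(X)\cap N_\gp(Z)|=2$ when $X\cup Z=T$ (the paper shows the overlap is at most one item only when $X\cup Z\subsetneq T$, Claim~\ref{cl:dang}\ref{eq:cup}). In that configuration a buyer $t_0\in X$ whose two cheapest legal items are exactly the two shared items would purchase an infeasible pair, and no ordering internal to the blocks fixes this. This is precisely the crux of the paper's argument (Subcase~2.2.2 and Claim~\ref{cl:key2}), which first proves that the bad configuration forces $X\cup Z=T$ and $N_\gp(X)\cap N_\gp(Z)=\{s_1,s_2\}$, and then builds a bespoke three-block ordering around $s_1$ and $s_2$. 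Your proposal defers exactly this to ``checking that near-tight sets straddling the block boundary remain honored,'' which is the step that requires the new idea rather than a routine verification.

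A second, more local error: for a minimal near-tight $Z$ you place the block $N_\gp(Z)$ at the \emph{beginning} of the ordering and append $S-N_\gp(Z)$ afterwards. This is backwards. A buyer $t\in T-Z$ with two or more neighbors in $N_\gp(Z)$ would then take both of its items from $N_\gp(Z)$, which is infeasible since $Z\subseteq T-t$ is dangerous and both items lie in $N_\gp(Z)$; relegating one boundary item to the end of the block does not help, as that item still precedes everything in $S-N_\gp(Z)$. The paper's Subcases~2.1 and~2.2.1 order $S-N_\gp(Z)$ (respectively the complement of the block) \emph{first}, precisely so that outside buyers take at most one item from the critical neighborhood. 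This is fixable, but as written the recursion does not produce an adequate ordering even in the cases where your laminarity claim holds.
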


\begin{proof}
Let $G=(S,T;E)$ and $w$ be the bipartite graph and weight function associated with the market. Take an optimal non-negative weighted covering  $\pi$ of $w$ provided by Lemma~\ref{lem:dual}, and consider the subgraph $G_\pi=(S,T;E_\pi)$ of tight edges. For simplicity, we call a subset $M\subseteq E_\pi$ a \emph{$(1,2)$-factor} if $d_M(s)=1$ for every $s\in S$ and $d_M(t)=2$ for every $t\in T$. By \eqref{eq:FAC} and Lemmas~\ref{lem:all} and ~\ref{lem:opt}, there is a one-to-one correspondence between optimal allocations and $(1,2)$-factors of $G_\pi$. Therefore, by Lemma~\ref{lem:fac}, it suffices to show the existence of an adequate ordering $\sigma$ for $G_\pi$.

We prove by induction on $|T|$. The statement clearly holds when $|T|=1$, hence we assume that $|T|\geq 2$. As there exists a $(1,2)$-factor in $\gp$, we have $|N_\gp(Y)|\geq 2|Y|$ for every $Y\subseteq T$ by Lemma~\ref{lem:bm}\ref{eq:a}. We distinguish three cases. 
\medskip

\noindent \textbf{Case 1.} $|N_\gp(Y)|\geq 2|Y|+2$ for every $\emptyset\neq Y\subsetneq T$.

For any $t\in T$ and $s_1,s_2\in N_\gp(t)$, the graph $G_\pi-\{s_1,s_2,t\}$ still satisfies the conditions of Lemma~\ref{lem:bm}\ref{eq:a}, hence $\{s_1,s_2\}$ is feasible for $t$. Therefore, $\sigma$ can be chosen arbitrarily.
\medskip

\noindent \textbf{Case 2.} $|N_\gp(Y)|\geq 2|Y|+1$ for $\emptyset\neq Y\subsetneq T$ and there exists $Y$ for which equality holds.

We call a set $Y\subseteq T$ \emph{dangerous} if $|N_\gp(Y)|=2|Y|+1$. By Lemma~\ref{lem:bm}\ref{eq:a}, a pair $\{s_1,s_2\}\subseteq N_\gp(t)$ is not feasible for buyer $t$ if and only if there exists a dangerous set $Y\subseteq T-t$ with $s_1,s_2\in N_\gp(Y)$. In such case, we say that $Y$ \emph{belongs to buyer $t$}.  Notice that the same dangerous set might belong to several buyers.

\begin{claim}\label{cl:dang}
Assume that $Y_1$ and $Y_2$ are dangerous sets with $Y_1\cup Y_2\subsetneq T$.
\begin{enumerate}[label=(\alph*),topsep=2pt,itemsep=0pt,partopsep=0pt,parsep=1pt]
    \item If $Y_1\cap Y_2=\emptyset$ and $N_\gp(Y_1)\cap N_\gp(Y_2)\neq\emptyset$, then $|N_\gp(Y_1)\cap N_\gp(Y_2)|=1$ and $Y_1\cup Y_2$ is dangerous. \label{eq:cup}
    \item If $Y_1\cap Y_2\neq\emptyset$, then both $Y_1\cap Y_2$ and $Y_1\cup Y_2$ are dangerous. \label{eq:cap}
\end{enumerate}
\end{claim}
\begin{proof}
Observe that 
\begin{align*}
    (2|Y_1|+1)+(2|Y_2|+1)
    {}&{}=
    |N_\gp(Y_1)|+|N_\gp(Y_2)|\\
    {}&{}=
    |N_\gp(Y_1)\cap N_\gp(Y_2)|+|N_\gp(Y_1)\cup N_\gp(Y_2)|\\
    {}&{}=
    |N_\gp(Y_1)\cap N_\gp(Y_2)|+|N_\gp(Y_1\cup Y_2)|.
\end{align*}

Assume first that $Y_1\cap Y_2=\emptyset$. Then $|N_\gp(Y_1)\cap N_\gp(Y_2)|\leq 1$ as otherwise $|N_\gp(Y_1\cup Y_2)|\leq 2(|Y_1|+|Y_2|)=2|Y_1\cup Y_2|$, contradicting the assumption of Case 2. If $|N_\gp(Y_1)\cap N_\gp(Y_2)|=1$, then $|N_\gp(Y_1\cup Y_2)|=2|Y_1\cup Y_2|+1$ and so $Y_1\cup Y_2$ is dangerous.

Now consider the case when $Y_1\cap Y_2\neq\emptyset$. Then 
\begin{align*}
    |N_\gp(Y_1)\cap N_\gp(Y_2)|+|N_\gp(Y_1\cup Y_2)|
    {}&{}\geq 
    |N_\gp(Y_1\cap Y_2)|+|N_\gp(Y_1\cup Y_2)|\\
    {}&{}\geq
    (2|Y_1\cap Y_2|+1)+(2|Y_1\cup Y_2|+1)\\
    {}&{}=
    (2|Y_1|+1)+(2|Y_2|+1).
\end{align*}
Therefore, we have equality throughout, implying that both $Y_1\cap Y_2$ and $Y_1\cup Y_2$ are dangerous.
\end{proof}

Let $Z\subseteq T$ be an inclusionwise maximal dangerous set.
\medskip

\noindent \textbf{Subcase 2.1.} There is no dangerous set disjoint from $Z$.

First we show that if a pair $s_1,s_2\in N_\gp(t)$ is not feasible for a buyer $t\in T-Z$, then $s_1,s_2\in N_\gp(Z)$. Indeed, if $\{s_1,s_2\}$ is not feasible for $t$, then there is a dangerous set $X$ belonging to $t$ with $s_1,s_2\in N_\gp(X)$. Since $t\notin X\cup Z$ and $Z\cap X\neq\emptyset$ by the assumption of the subcase, Claim~\ref{cl:dang}\ref{eq:cap} applies implying that $X\cup Z$ is dangerous as well. The maximal choice of $Z$ implies $X\cup Z=Z$, hence $Z$ belongs to $t$ and $s_1,s_2\in N_\gp(Z)$.

Now take an arbitrary buyer $t_0\in T-Z$ who shares a neighbor with $Z$, and let $s_0\in N_\gp(t_0)\cap N_\gp(Z)$. Let $\sigma'$ be an arbitrary ordering of the items in $S-N_\gp(Z)$. Furthermore, let $G''\coloneqq \gp[Z\cup N_\gp(Z)]-s_0$. For any $(1,2)$-factor of $\gp$ containing $s_0t_0$, its restriction to $G''$ is a $(1,2)$-factor as well. In $G''$, some of the edges might not be contained in any of the $(1,2)$-factors. Still, by induction and Lemma~\ref{lem:fix}, there exists an adequate ordering $\sigma''$ of the items in $G''$. Finally, let $\sigma'''$ denote the trivial ordering of the single element set $\{s_0\}$. We set $\sigma\coloneqq (\sigma',\sigma'',\sigma''')$. Then any buyer $t\in T-Z$ will choose at most one item from $N_\gp(Z)$, hence the adequateness of $\sigma$ follows from that of $\sigma''$ and the assumption of the subcase.
\medskip

\noindent \textbf{Subcase 2.2.} There exists a dangerous set disjoint from $Z$.

Let $X$ be an inclusionwise minimal dangerous set disjoint from $Z$.
\smallskip

\noindent \textbf{Subcase 2.2.1.} For any $t\in X$ and for any $s_1,s_2\in N_\gp(t)$, the set $\{s_1,s_2\}$ is feasible.

Take an arbitrary buyer $t_0\in T-X$ who shares a neighbor with $X$ and let $s_0\in N_\gp(t_0)\cap N_\gp(X)$. Let $G'$ denote the graph obtained by deleting $X\cup (N_\gp(X)-s_0)$. For any $(1,2)$-factor of $\gp$ containing $s_0t_0$, its restriction to $G'$ is a $(1,2)$-factor as well. In $G'$, some of the edges might not be contained in any of the $(1,2)$-factors. Still, by induction and Lemma~\ref{lem:fix}, there exists an adequate ordering $\sigma'$ of the items in $G'$. Let $\sigma''$ be an arbitrary ordering of the items in $N_\gp(X)-s_0$, and define $\sigma\coloneqq (\sigma',\sigma'')$. Then $t_0$ chooses at most one item from $N_\gp(X)$ (namely $s_0$), since she has at least one neighbor outside of $N_\gp(X)$ and those items have smaller indices in the ordering. Thus the adequateness of $\sigma$ follows from that of $\sigma'$ and from the assumption that any pair $s_1,s_2\in N_\gp(t)$ form a feasible set for $t\in X$.

\smallskip

\begin{figure}[t!]
\centering
\begin{subfigure}[t]{0.4\textwidth}
  \centering
  \includegraphics[width=.78\linewidth]{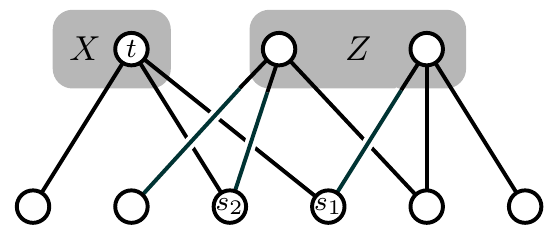}
  \caption{The graph of tight edges corresponding to the instance on Figure~\ref{fig:ex}, where $Z$ is an inclusionwise maximal dangerous set, and $X$ is an inclusionwise minimal dangerous set disjoint from $Z$.}
  \label{fig:alg1}
\end{subfigure}\hfill
\begin{subfigure}[t]{0.5\textwidth}
  \centering
  \includegraphics[width=\linewidth]{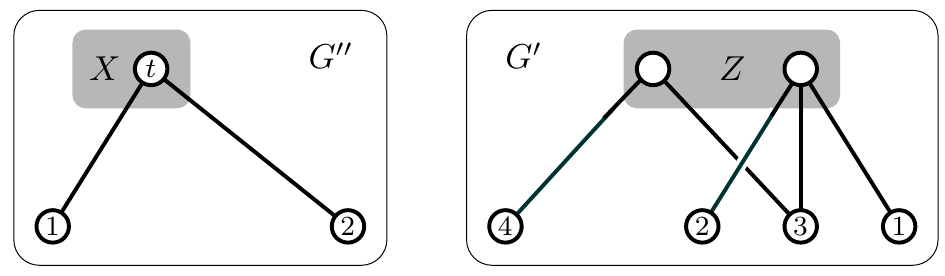}
  \caption{The graphs $G'=\gp-(X\cup (N_\gp(X)-s_1))$ and $G''=\gp-(Z\cup (N_\gp(Z)-s_1))$, together with an adequate ordering $\sigma'$ and an arbitrary ordering $\sigma''$, respectively.}
  \label{fig:alg2}
\end{subfigure}\vspace{5pt}
\begin{subfigure}[t]{0.4\textwidth}
  \centering
  \includegraphics[width=.78\linewidth]{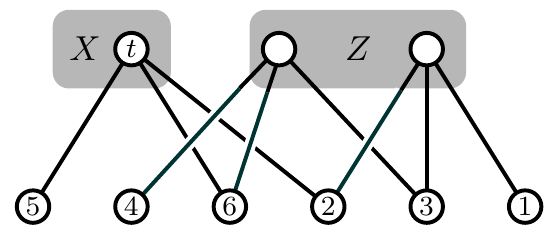}
  \caption{Construction of the ordering $\sigma=(\sigma',\sigma''|_{N_\gp(X)-s_2},\sigma'''$), where $\sigma'''$ is the trivial ordering of the one element set $\{s_2\}$.}
  \label{fig:alg4}
\end{subfigure}
\caption{An illustration of the inductive step in Subcase 2.2.2.}
\label{fig:alg}
\end{figure}

\noindent \textbf{Subcase 2.2.2.} There exists $t_0\in X$ and $s_1,s_2\in N_\gp(t)$ such that $\{s_1,s_2\}$ is not feasible.

The following claim is the key observation of the proof.

\begin{claim} \label{cl:key2}
$X\cup Z=T$ and $N_\gp(X)\cap N_\gp(Z)=\{s_1,s_2\}$. 
\end{claim}
\begin{proof}
Let $Y\subseteq T-t_0$ be a dangerous set with $s_1,s_2\in N_\gp(t_0)$. As $t_0\in T-(Z\cup Y)$ and $Z$ is inclusionwise maximal, either $Y\subseteq Z$ or $Y\cap Z=\emptyset$ by Claim~\ref{cl:dang}\ref{eq:cap}. In the latter case, $X$ and $Y$ are dangerous sets with $X\cup Y\subsetneq T$. Furthermore, $|N_\gp(X) \cap N_\gp(Y)| \geq 2$ since $s_1$ and $s_2$ are contained in both. Hence, by Claim~\ref{cl:dang}\ref{eq:cup}, $X\cap Y\neq\emptyset$. But then $X\cap Y$ is dangerous by Claim~\ref{cl:dang}\ref{eq:cap}, contradicting the minimality of $X$. Therefore, we have $Y\subseteq Z$. By Claim~\ref{cl:dang}\ref{eq:cup}, $X\cup Z=T$. As $|N_\gp(X)|=2|X|+1$, $|N_\gp(Z)=2|Z|+1$, and $|S|=2|T|=2|T|+2|Z|$, the claim follows.
\end{proof}

Let $G'$ and $G''$ denote the graphs obtained by deleting $X\cup(N_\gp(X)-s_1)$ and $Z\cup (N_\gp(Z)-s_1)$, respectively, see Figure~\ref{fig:alg}. For any $(1,2)$-factor of $\gp$ containing $t_0s_2$, its restriction to $G'$ is a $(1,2)$-factor as well. In $G'$, some of the edges might not be contained in any of the $(1,2)$-factors. Still, by induction and Lemma~\ref{lem:fix}, there exists an adequate ordering $\sigma'$ of the items in $G'$. Let $\sigma''$ be an arbitrary ordering of the items in $N_\gp(X)-s_2$. Finally, let $\sigma'''$ denote the trivial ordering of the single element set $\{s_2\}$. Let $\sigma\coloneqq (\sigma',\sigma''|_{N_\gp(X)-s_1},\sigma''')$. We claim that $\sigma$ is adequate. Indeed, if a buyer $t\in Z$ arrives first, then she chooses two items from $N_\gp(Z)-s_2$ according to $\sigma'$. As $\sigma'$ is adequate for $G'$ and $G''-s_1+s_2$ has a $(1,2)$-factor, the remaining graph has a $(1,2)$-factor as well. If a buyer $t\in X$ arrives first, then she chooses two items from $N_\gp(X)-s_2$ that form a feasible set, since the only pair that might not be feasible for her is $\{s_1,s_2\}$ by Claim~\ref{cl:key2}. 
\medskip

\noindent \textbf{Case 3.} $|N_\gp(T')|= 2|T'|$ for some $\emptyset\neq T'\subsetneq T$.

We claim that there exists a set $T'$ satisfying the assumption if and only if $G_\pi$ is not connected. Indeed, if $G_\pi$ is not connected, then necessarily the number of items is exactly twice the number of buyers in every component as the graph is supposed to have a $(1,2)$-factor. To see the other direction, let $S'\coloneqq N_\gp(T')$, $T''\coloneqq T-T'$, $S''\coloneqq S-S'$, and consider the subgraphs $G'\coloneqq G_\pi[T'\cup S']$ and $G''\coloneqq G_\pi[T''\cup S'']$. As every tight edge is legal and all the vertices in $S'$ are matched to vertices in $T'$ in any optimum $b$-matching, $G_\pi$ contains no edges between $T''$ and $S'$. Therefore, $G_\pi$ is not connected, and it is the union of $G'$ and $G''$. By induction, there exist adequate orderings $\sigma'$ and $\sigma''$ of $S'$ and $S''$, respectively. Then the ordering $\sigma\coloneqq (\sigma',\sigma'')$ is adequate with respect to $\pi$. 

\medskip

\begin{algorithm}[h!]
  \caption{Determining an adequate ordering for bi-demand markets with property~\eqref{eq:FAC}.}\label{alg:bi}
  \begin{algorithmic}[1]
    \Statex \textbf{Input:} Graph $G_\pi$ of tight edges, upper bounds $b(t)=2$ for $t\in T$ and $b(s)=1$ for $s\in S$.
    \Statex \textbf{Output:} Adequate ordering $\sigma$ of the items. 
	\If{$|N_\gp(Y)|\geq 2|Y|+2$ for every $\emptyset\neq Y\subsetneq T$} \label{st:case1}
		\State Let $\sigma$ be an arbitrary ordering of $S$.
	\ElsIf{$|N_\gp(Y)|\geq 2|Y|+1$ for every $\emptyset\neq Y\subsetneq T$, and there exists $Y$ for which equality holds} \label{st:case2}
	    \State Determine an inclusionwise maximal dangerous set $Z$. \label{st:max}
	    \If{there exists no dangerous set disjoint from $Z$} \label{st:disj}
	        \State Take an item $s_0\in N_\gp(Z)$ that has a neighbor $t_0\in T-Z$.
	        \State Let $\sigma'$ be an arbitrary ordering of $S-N_\gp(Z)$.
	        \State Determine an adequate ordering $\sigma''$ for $G''\coloneqq \gp[Z\cup (N_\gp(Z)-s_0)]$.
	        \State Let $\sigma'''$ be the trivial ordering of the single item $s_0$.
	        \State Set $\sigma\coloneqq (\sigma',\sigma'',\sigma''')$.
	    \Else
	        \State Determine an inclusionwise minimal dangerous set $X$ disjoint from $Z$. \label{st:min}
	        \If{$\{s_1,s_2\}$ is feasible for any $t\in X$ and $s_1,s_2\in N_\gp(t)$} \label{st:feas}
	            \State Take an item $s_0\in N_\gp(X)$ that has a neighbor $t_0\in T-X$.
	            \State Determine an adequate ordering $\sigma'$ for $G'\coloneqq \gp-(X\cup (N_\gp(X)-s_0))$.
	            \State Let $\sigma''$ be an arbitrary ordering of $N_\gp(X)-s_0$.
	            \State Set $\sigma\coloneqq (\sigma',\sigma'')$.
	        \Else{~~(Observation: $X\cup Z=T$ and $N_\gp(X)\cap N_\gp(Z)=\{s_1,s_2\}$.)}
	            \State Determine an adequate ordering $\sigma'$ for $G'\coloneqq \gp-(X\cup (N_\gp(X)-s_1))$.
	            \State Let $\sigma''$ be an arbitrary ordering of the items in $G''\coloneqq \gp-(Z\cup(N_\gp(Z)-s_1))$.
	            \State Let $\sigma'''$ be the trivial ordering of the single item $s_2$.
	            \State Set $\sigma\coloneqq (\sigma',\sigma''|_{N_\gp(X)-s_1},\sigma''')$.
	        \EndIf
        \EndIf
    \Else{~~(Observation: the graph $\gp$ is not connected.)} \label{st:case3}
    \State Let $\emptyset\neq T'\subsetneq T$ be a set with $|N_\gp(T')|=2|T'|$.
        \State Determine an adequate ordering $\sigma'$ for $G'\coloneqq \gp[T'\cup N_\gp(T')]$.
        \State Determine an adequate ordering $\sigma''$ for in $G''\coloneqq \gp-(T'\cup N_\gp(T'))$.
        \State Set $\sigma\coloneqq (\sigma',\sigma'')$.
	\EndIf
    \State \textbf{return} $\sigma$ \label{st:end}
  \end{algorithmic}
\end{algorithm}

By Lemma~\ref{lem:dual}, $\pi$ can be determined in polynomial time, hence the graph of tight edges is available. The algorithm for determining an adequate ordering for $G_\pi$ is presented as Algorithm~\ref{alg:bi}. To see that all steps can be performed in polynomial time, it suffices to show how to decide whether a pair $\{s_1,s_2\}$ of items forms a feasible set for a buyer $t$, and how to find an inclusionwise maximal or minimal dangerous set, if exists, efficiently. Checking the feasibility of $\{s_1,s_2\}$ for $t$ reduces to finding a $(1,2)$-factor in $\gp-\{s_1,s_2,t\}$. Dangerous sets can be found as follows: take two copies of each vertex $t\in T$, and connect them to the vertices in $N_\gp(t)$. Furthermore, add a dummy vertex $w_0$ to the graph and connect it to every vertex in $S$. Let $G'=(S',T';E')$ denote the graph thus obtained. For a set $Y\subseteq T$, let $Y'\subseteq T'$ consist of the copies of the vertices in $Y$ plus the vertex $w_0$. It is not difficult to check that $Y\subseteq T$ is an inclusionwise minimal or maximal dangerous set of $G_\pi$ if and only if $Y'$ is an inclusionwise minimal or maximal subset of $T'$ with $|N_{G'}(Y')|=|Y'|$. Hence $Y$ can be determined, for example, by relying on K\H{o}nig's alternating path algorithm \cite{konig1916graphen}. When an inclusionwise minimal dangerous set $X$ is needed that is disjoint from $Z$, then the same approach can be applied for the graph $\gp-Z$.  
\end{proof}

\begin{rem}
Theorem~\ref{thm:bi} settles the existence of optimal dynamic prices when the demand of each buyer is exactly two. However, the proof can be straightforwardly extended to the case when the demand of each buyer is at most two. 
\end{rem}

\section{Conclusions and open problems}

We considered the existence of optimal dynamic prices for multi-demand valuations. By relying on structural properties of an optimal dual solution, we gave polynomial-time algorithms for determining such prices in unit-demand markets and in multi-demand markets up to three buyers under a technical assumption, thus giving new interpretations of results of Cohen-Addad et al. and Berger et al. We also proved that any bi-demand market satisfying the same technical assumption has a dynamic pricing scheme that achieves optimal social welfare. 

It remains an interesting open question whether an analogous approach works when the total demand of the buyers exceeds the number of available items. Another open problem is to decide the existence of optimal dynamic prices in multi-demand markets in general.

\medskip

%%%%%%%%%%%%%%%%
\paragraph{Acknowledgements}
%%%%%%%%%%%%%%%%

The authors are truly grateful to Joseph Cheriyan, David Kalichman and Kanstantsin Pashkovich for pointing out a flaw in an earlier version of the paper, which motivated the addition of Lemma~\ref{lem:fix}. 

The work was supported by the Lend\"ulet Programme of the Hungarian Academy of Sciences -- grant number LP2021-1/2021 and by the Hungarian National Research, Development and Innovation Office -- NKFIH, grant numbers FK128673 and TKP2020-NKA-06.

\bibliographystyle{abbrv}
\bibliography{dynamic}

\end{document}